\newtheorem{theorem}{Theorem}[section]
\newtheorem{lemma}[theorem]{Lemma}
\newtheorem{conjecture}[theorem]{Conjecture}
\newtheorem{proposition}[theorem]{Proposition}
\theoremstyle{remark}
\numberwithin{equation}{section}
\title[Joint moments of random symplectic and orthogonal matrices]{Joint moments of derivatives of characteristic polynomials of random symplectic and orthogonal matrices}
\author{Julio C. Andrade}
\address{Department of Mathematics, University of Exeter, Exeter, EX4 4QF, United Kingdom}
\email{j.c.andrade@exeter.ac.uk}
\author{Christopher G. Best}
\address{Department of Mathematics, University of Exeter, Exeter, EX4 4QF, United Kingdom}
\email{cgb212@exeter.ac.uk}
\date{\today}
\subjclass[2010]{Primary 60B20; Secondary 11M06, 11M50}
\keywords{Random matrix theory, joint moments, characteristic polynomials, random symplectic matrices, random orthogonal matrices, Riemann zeta function, $L$-functions}
\begin{document}

\begin{abstract}
We investigate the joint moments of derivatives of characteristic polynomials over the unitary symplectic group $Sp(2N)$ and the orthogonal ensembles $SO(2N)$ and $O^-(2N)$. We prove asymptotic formulae for the joint moments of the $n_1$-th and $n_2$-th derivatives of the characteristic polynomials for all three matrix ensembles. Our results give two explicit formulae for each of the leading order coefficients, one in terms of determinants of hypergeometric functions and the other as combinatorial sums over partitions. We use our results to put forward conjectures on the joint moments of derivatives of $L$-functions with symplectic and orthogonal symmetry.
\end{abstract}

\maketitle

\section{Introduction}

Let $G(2N)\in\{Sp(2N),\ SO(2N),\ O^-(2N)\}$, where $Sp(2N)$ is the the group of $2N \times 2N$ unitary symplectic matrices and $SO(2N)$ and $O^-(2N)$ are the subsets of $2N \times 2N$ orthogonal matrices with determinant $+1$ and $-1$, respectively. Also, denote the characteristic polynomial of a matrix $A\in G(2N)$ by

\begin{equation*}
\Lambda_A(s)=\det \left( I-As \right).
\end{equation*}
In this paper we consider the joint moments

\begin{equation} \label{joint moments def}
\int_{G(2N)}  \left( \Lambda_A^{(n_1)}(1) \right)^{k_1} \left( \Lambda_A^{(n_2)}(1) \right)^{k_2} dA
\end{equation}
of the $n_1$-th and $n_2$-th derivatives of the characteristic polynomials, where $n_1, n_2$ are non-negative integers and $dA$ denotes the Haar measure on the relevant matrix ensemble. Using techniques developed in \cite{ABPRW14, CRS06, KW23a}, we obtain asymptotic formulae for (\ref{joint moments def}) for each $G(2N)$ and for all non-negative integers $k_1, k_2$. Our main results give two explicit expressions for the leading order coefficients for each of the matrix ensembles under consideration and are detailed in section \ref{Main results}.

The problem we study here is part of a general problem to obtain exact formulae for the complex moments of the derivatives of characteristic polynomials. A key motivation is the link between random matrix theory and the study of families of $L$-functions and their value distribution in analytic number theory. Specifically, one can use formulae obtained for characteristic polynomials of the various matrix ensembles to predict formulae for the corresponding quantities for $L$-functions with the same symmetry type. The complex moments of the derivatives of characteristic polynomials and $L$-functions can then be used to infer information on the zeros of the derivatives through Jensen's formula. For results on the radial distribution of the zeros of the derivative of characteristic polynomials and on the horizontal distribution of the zeros of the derivative of the Riemann zeta function, see, for example, \cite{ DFFHMP10, Mezzadri03} and \cite{Sou98, Zhang}, respectively.

Additionally on the number theory side, the order of vanishing of an $L$-function at the central point, which is controlled by the derivatives of the $L$-function, is widely believed to contain deep arithmetic and geometric information. The Birch and Swinnerton-Dyer Conjecture for example, famously states that the order of vanishing of an $L$-function attached to an elliptic curve over $\mathbb{Q}$ is equal to the rank of the curve.

For the ensemble of random unitary matrices $U(N)$, Conrey, Rubinstein and Snaith \cite{CRS06} proved that for integer $k \geq 1$,

\begin{equation*}
\int_{U(N)} |\Lambda_A'(1)|^{2k} dA \sim c_k N^{k^2+2k},
\end{equation*}
where

\begin{equation*}
c_k=(-1)^{k(k+1)/2} \sum_{h=0}^k \binom{k}{h} \left( \frac{d}{dx} \right)^{k+h} \left( e^{-x} x^{-k^2/2} \det_{k \times k} \left( I_{i+j-1} (2 \sqrt{x}) \right) \right),
\end{equation*}
with $I_n(x)$ denoting the modified Bessel function of the first kind. Here, and throughout the paper, the indices $i$ and $j$ of the matrix in the determinant range from 1 to $k$. Also proven in \cite{CRS06} is a similar asymptotic formula for the $2k$-th moment of the derivative of an analogue of Hardy's $Z$-function. As an application, the authors use their results to make a conjecture for the moments of the derivative of the Riemann zeta function and of the $Z$-function. Forrester and Witte \cite{FW06} have given alternate expressions for the leading order coefficients obtained in \cite{CRS06} in terms of solutions to Painlevé III differential equations.

Concerning the joint moments over the unitary ensemble, one is interested in the quantity

\begin{equation} \label{unitary joint moments}
\int_{U(N)} |\Lambda_A^{(n_1)}(1)|^{2M} |\Lambda_A^{(n_2)}(1)|^{2M-2k} dA.
\end{equation}
When $n_1=1$ and $n_2=0$, Hughes \cite{Hughes} was able to show that the limit of (\ref{unitary joint moments})$/N^{k^2+2M}$ as $N\to\infty$ exists when $k$ and $M$ are integers. This result was extended to all suitable, real $k$ and $M$ by Assiotis, Keating and
Warren in \cite{AKW22}. In \cite{BCPRS19}, Bailey et al. obtain an asymptotic formula for (\ref{unitary joint moments}) when $k \geq M$ are both non-negative integers. 

In the case of general $n_1, n_2$, Barhoumi-Andr{\'e}ani \cite{BA20} gave an asymptotic formula for (\ref{unitary joint moments}) for integer $k$ and $M$ with $k \geq M$ and $k \geq 2$ where the leading order coefficient is given in the form of a certain $(k-1)$-fold real integral. Recently, Keating and Wei \cite{KW23a} have obtained asymptotic formulae for (\ref{unitary joint moments}) and for the joint moments of the $n_1$-th and $n_2$-th of the analogue of Hardy's $Z$-function for all integers $k \geq M \geq 0$. They give two explicit expressions for the leading order coefficients, one in terms of derivatives of determinants involving the modified Bessel function and the other as combinatorial sums involving partitions. They also use their results to motivate conjectures for the joint moments of the $n_1$-th and $n_2$-th derivatives of the Riemann zeta function and of the $Z$-function. The conjectures made in \cite{KW23a} are shown to agree with the known results of \cite{Hall99, Hall04, Ingham28}. In \cite{KW23b}, Keating and Wei further explore the structure and properties of their leading order coefficients. They establish recursive relations that the coefficients satisfy and also build a connection to a solution of the $\sigma$-Painlev{\'e} III$'$ equation.

Turning to the symplectic and orthogonal matrix ensembles, Altu\u{g} et al. \cite{ABPRW14} considered the moments of the $m$-th derivative

\begin{equation*}
M_k(G(2N), m):=\int_{G(2N)} \left( \Lambda_A^{(m)}(1) \right)^k dA.
\end{equation*}
Extending the results of \cite{CRS06} to these ensembles, they prove asymptotic formulae for $M_k(G(2N), m)$ as $N\to\infty$ for integer $k \geq 1$ when $G(2N)=Sp(2N)$ or $G(N)=SO(2N)$ and $m=2$, and when $G(2N)=O^-(2N)$ with $m=3$. One considers the second derivative rather than the first in the case of $Sp(2N)$ and $SO(2N)$ since $\Lambda_A'(1)$ can be expressed simply in terms of $\Lambda_A(1)$. Thus, the moments of the first derivative can be computed using the result of Keating and Snaith \cite{KS00} on the moments of $\Lambda_A(1)$. If $A\in O^-(2N)$, then $\Lambda_A(1)=0$ and $\Lambda_A''(1)$ has as simple expression in terms of $\Lambda_A'(1)$ Hence, in this case, it is the moments of $\Lambda_A'''(1)$ that are of interest.

The leading order coefficients obtained in \cite{ABPRW14} are given in terms of derivatives of determinants involving hypergeometric functions. These determinants are shown to satisfy a differential recurrence relation similar to a Toda lattice equation connected to $\tau$-function theory in the study of Painlev{\'e} differential equations. An interesting question put forward in \cite{ABPRW14} is whether there is a differential equation in the symplectic and orthogonal cases which plays a part analogous to Painlev{\'e} III in the unitary setting. Gharakhloo and Witte \cite{GW23} have made promising progress in this direction in their study of $2j-k$ and $j-2k$ bi-orthogonal polynomial systems on the unit circle.

The authors of \cite{ABPRW14} also use their results to make conjectures for the asymptotic behaviour of the moments of derivatives at the central point of $L$-functions with symplectic or orthogonal symmetry. After stating our results in section \ref{Main results}, we will extend these to give general conjectures for the joint moments of the derivatives of $L$-functions with these symmetry types.

\subsection{Notation}

Recall that an $N \times N$ matrix $A$ is said to be unitary if $A A^*=I$, where $A^*$ is the conjugate transpose of $A$. The unitary symplectic group $Sp(2N)$ is the subgroup of $2N \times 2N$ unitary matrices $A$ which satisfy $A^T \Omega A=\Omega$, where

\begin{equation*}
\Omega=\begin{pmatrix} 0 & I \\ -I & 0 \end{pmatrix},
\end{equation*}
with $I$ the $N \times N$ identity matrix. The special orthogonal group $SO(2N)$ and $O^-(2N)$ are the subsets of orthogonal $2N \times 2N$ matrices with determinant $+1$ and $-1$, respectively. Each of these matrix ensembles is endowed with the normalised Haar measure $dA$.

We will set $\i^2=-1$ and the variable $i$ will only be used as an index. For real numbers $x$, we use $[x]$ to denote the greatest integer less than or equal to $x$. We write $S_k$ for the set of permutations on $\{1, 2, \dots, k\}.$
The multinomial coefficient is defined as

\begin{equation*}
\binom{n}{l_1, \dots, l_k}=\frac{n!}{l_1! \cdots l_k!},
\end{equation*}
for integers $n$ and $l_1, \dots, l_k$ with $l_1+\cdots+l_k=n$. Also, for integer $n$, whenever we write $l_1+\cdots+l_k=n$ or $l_1+\cdots+l_k \leq n$ this means that $l_i \geq 0$ are taken to be integers. 

For any $w=(w_1,\dots, w_k)\in\mathbb{C}^k$, the Vandermonde determinant is denoted by

\begin{equation*}
\Delta(w):=\det_{k\times k} (w_i^{j-1})=\prod_{1\leq i<j\leq k} (w_i-w_j),
\end{equation*}
and we write $w^2=(w_i^2)_{1\leq i\leq k}$. We will also make use of Vandermonde determinants of differential operators, written as

\begin{equation*}
\Delta \left( \frac{d}{dx} \right):=\det_{k\times k} \left( \frac{d^{j-1}}{dx_i^{j-1}} \right)=\prod_{1\leq i<j\leq k} \left( \frac{d}{dx_i}-\frac{d}{dx_j} \right).
\end{equation*}

Lastly, for $u\in\mathbb{C}$ and $m\in\mathbb{Z}$, we let

\begin{align} \label{hypergeometric def}
g_m (u) & :=\frac{1}{2\pi\i} \oint_{|w|=1} \frac{e^{w+u/w^2}}{w^{m+1}} dw \nonumber \\
& =\frac{1}{\Gamma(m+1)} {}_0 F_2 \left( \frac{m}{2}+1,\frac{m+1}{2};\frac{u}{4} \right).
\end{align}
%where ${}_0 F_2$ is a usual hypergeometric function.
These hypergeometric functions will play the role that the modified Bessel function plays in the unitary case. For negative $m$, one should interpret the above expression as the limit.

\section{Main results} \label{Main results}

We now state our main results. Our first two theorems give an asymptotic formula for the joint moments of derivatives of characteristic polynomials of matrices over $Sp(2N)$.

\begin{theorem} \label{Sp theorem 1}
Let $0\leq n_1 \leq n_2$ be integers and let $k_1, k_2$ be non-negative integers, not both $0$. Set $k=k_1+k_2$. Then, we have

\begin{equation*}
\int_{Sp(2N)} \left( \Lambda_A^{(n_1)}(1) \right)^{k_1} \left( \Lambda_A^{(n_2)}(1) \right)^{k_2} dA=b_{k_1, k_2}^{Sp}(n_1, n_2) \cdot (2N)^{k(k+1)/2+k_1 n_1+k_2 n_2} \left(1+O(N^{-1}) \right),
\end{equation*}
where

\begin{align*}
b_{k_1, k_2}^{Sp}(n_1, n_2) & =\frac{(-1)^{k_1 n_1+k_2 n_2}}{2^{k(k+1)/2+k_1 n_1+k_2 n_2}} \sum_{u_1+\cdots+u_P=k_1} \binom{k_1}{u_1\dots, u_P} \frac{(n_1!)^{k_1}}{\prod_{i=1}^P (\textbf{a}_i!)^{u_i} (\prod_{j=1}^{[n_1/2]} j^{\sum_{i=1}^P u_i a_{i,j}})} \\
& \times \sum_{v_1+\cdots+v_Q=k_2} \binom{k_2}{v_1\dots, v_Q} \frac{(n_2!)^{k_2}}{\prod_{i=1}^Q (\textbf{b}_i!)^{v_i} (\prod_{j=1}^{[n_2/2]} j^{\sum_{i=1}^Q v_i b_{i,j}})} \\
& \times \sum_{\substack{\sum_{i=1}^k m_{s, i}=m_s \\ s=2, \dots, n}} \left( \prod_{s=2}^n \binom{m_s}{m_{s, 1}, \dots, m_{s, k}} \right) \left( \frac{d}{du} \right)^{m_1} \left. \det_{k\times k} \left( g_{2i-j+2 \sum_{s=2}^n s m_{s, i}}(u) \right) \right|_{u=0}.
\end{align*}
Here, for $i=1, \dots, P$, we define ${\bf a}_i=(a_{i, 0}, a_{i,1}, \dots, a_{i, [n_1/2]})$ to be the $P$ distinct tuples of integers satisfying

\begin{equation*}
a_{i, j} \geq 0 \ \textup{and} \ a_{i, 0}+2 \sum_{j=1}^{[n_1/2]} j a_{i, j}=n_1.
\end{equation*}
Similarly, the ${\bf b}_i=(b_{i, 0}, b_{i, 1}, \dots, b_{i, [n_2/2]})$ for $i=1, \dots, Q$ are defined to be the $Q$ tuples of integers satisfying

\begin{equation*}
b_{i, j} \geq 0 \ \textup{and} \ b_{i, 0}+2 \sum_{j=1}^{[n_2/2]} j b_{i, j}=n_2.
\end{equation*}
Also, we define ${\bf a}_i!:=\prod_{j=1}^{[n_1/2]} a_{i,j}!$ and ${\bf b}_i!:=\prod_{j=1}^{[n_2/2]} b_{i,j}!$. Finally, $m_j:=\sum_{i=1}^P u_i a_{i, j}+\sum_{i=1}^Q v_i b_{i,j}$ for $j=1, \dots, [n_1/2]$ and $m_j:=\sum_{i=1}^Q v_i b_{i,j}$ for $j=[n_1/2]+1, \dots, [n_2/2]$.
\end{theorem}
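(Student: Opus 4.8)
The plan is to adapt the method of \cite{CRS06, ABPRW14, KW23a} to the symplectic group, in four steps.

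\emph{Step 1: make $\Lambda_A^{(n)}(1)$ explicit.} Writing the eigenangles of $A\in Sp(2N)$ as $\pm\theta_1,\dots,\pm\theta_N$ gives $\Lambda_A(s)=\prod_{j=1}^N(1-2s\cos\theta_j+s^2)$, and the identity $1-2(1+y)\cos\theta_j+(1+y)^2=2d_j(1+y)+y^2$ with $d_j:=1-\cos\theta_j$ yields
\begin{equation*}
\Lambda_A(1+y)=\Lambda_A(1)\prod_{j=1}^N\left(1+y+q_jy^2\right),\qquad q_j:=\tfrac{1}{2d_j},
\end{equation*}
so that, reading off the coefficient of $y^n$ (each factor supplies $1$, $y$ or $q_jy^2$),
\begin{equation*}
\Lambda_A^{(n)}(1)=\Lambda_A(1)\,n!\sum_{\rho=0}^{[n/2]}\binom{N-\rho}{n-2\rho}\,e_\rho(q_1,\dots,q_N)
\end{equation*}
with $e_\rho$ the $\rho$-th elementary symmetric polynomial (this recovers $\Lambda_A'(1)=N\Lambda_A(1)$ and, for $n=2$, the starting point of \cite{ABPRW14}). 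Expanding $e_\rho$ in the power-sum basis and relabelling a pair $(\rho,\lambda)$, $\lambda\vdash\rho$, by the tuple $\mathbf a=(a_0,a_1,\dots)$ with $a_0:=n-2\rho$ and $a_j:=$ multiplicity of the part $j$ in $\lambda$, one recovers exactly the constraint $a_0+2\sum_{j\ge1}ja_j=n$ and the denominator $\mathbf a!\,\prod_{j\ge1}j^{a_j}$; since $\binom{N-\rho}{n-2\rho}=N^{a_0}/a_0!+O(N^{a_0-1})$ and all $\rho$ contribute at the same order once the average below is taken (each insertion $\sum_iq_i^{\,j}$ contributes a factor $N^{2j}$ against the $\Lambda_A(1)^k$-tilted measure), the leading part of $\Lambda_A^{(n)}(1)$ is a sum over such tuples of an explicit constant times $N^{a_0}\prod_{j\ge1}\bigl(\sum_{i=1}^N q_i^{\,j}\bigr)^{a_j}\,\Lambda_A(1)$.

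\emph{Step 2: expand the product.} Substituting this into $\bigl(\Lambda_A^{(n_1)}(1)\bigr)^{k_1}\bigl(\Lambda_A^{(n_2)}(1)\bigr)^{k_2}$ and applying the multinomial theorem to each power over the tuples $\mathbf a_i$, $\mathbf b_i$ produces precisely the outer sums $\sum_{u_1+\cdots+u_P=k_1}\binom{k_1}{u_1,\dots,u_P}$ and $\sum_{v_1+\cdots+v_Q=k_2}\binom{k_2}{v_1,\dots,v_Q}$ with the stated coefficients, and reduces the theorem to the $N\to\infty$ asymptotics of
\begin{equation*}
\int_{Sp(2N)}\Lambda_A(1)^{k}\prod_{j\ge1}\Bigl(\sum_{i=1}^N q_i^{\,j}\Bigr)^{m_j}\,dA,\qquad m_j=\sum_iu_ia_{i,j}+\sum_iv_ib_{i,j}.
\end{equation*}

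\emph{Step 3: the symplectic average (the main obstacle).} Here one argues as in \cite{CRS06, ABPRW14}. Since $\Lambda_A(1)^k=\prod_{l=1}^k\Lambda_A(1)$ and, by the factorisation of Step 1, $\frac{\partial^m}{\partial y^m}\log\Lambda_A(1+y)\big|_{0}=m!\sum_{i=1}^NP_m(q_i)$ with $P_m$ a polynomial of degree $[m/2]$, the power-sum insertions are generated by differential operators in auxiliary variables $\alpha_1,\dots,\alpha_k$ applied at $\alpha=0$ to $\int_{Sp(2N)}\prod_{l=1}^k\Lambda_A(e^{-\alpha_l})\,dA$; each of the $m_s$ insertions of the $s$-th power sum may be routed to any of the $k$ slots, which gives the multinomial factor $\prod_{s\ge2}\binom{m_s}{m_{s,1},\dots,m_{s,k}}$ (with $\sum_im_{s,i}=m_s$), and the degree-$[m/2]$ bound is responsible for the $[n/2]$-truncations in the tuples. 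Inserting the classical (CFKRS / Bump--Gamburd) determinantal formula for $\int_{Sp(2N)}\prod_l\Lambda_A(e^{-\alpha_l})\,dA$, rescaling $\alpha_l$ on the scale $1/(2N)$ and retaining the leading term degenerates the symplectic reproducing kernel to the functions $g_m$ of \eqref{hypergeometric def} --- the analogue of the Bessel kernel in the unitary setting, with $u/w^2$ in place of $x/w$ because the eigenvalues occur in conjugate pairs. The base average $\int_{Sp(2N)}\Lambda_A(1)^k\,dA$ yields $\det_{k\times k}\bigl(g_{2i-j}(0)\bigr)$; routing $m_{s,i}$ order-$s$ insertions ($s\ge2$) to slot $i$ shifts that row's index by $2sm_{s,i}$; and the $m_1$ order-$1$ insertions are generated by $(d/du)^{m_1}$ at $u=0$ through $\frac{d}{du}g_m=g_{m+2}$. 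Assembling these gives $(d/du)^{m_1}\det_{k\times k}\bigl(g_{2i-j+2\sum_{s\ge2}sm_{s,i}}(u)\bigr)\big|_{u=0}$.

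\emph{Step 4: constants and error.} Finally one collects the prefactors: the powers of $2$ (from $N^{a_0}=(2N)^{a_0}/2^{a_0}$ and from the $2d_i$'s) combine into $2^{-(k(k+1)/2+k_1n_1+k_2n_2)}$, $(2N)$ enters with the stated exponent, and the signs --- those of the power-sum expansion together with those generated in the reduction to the $k$-point function --- organise into the single factor $(-1)^{k_1n_1+k_2n_2}$; one then checks that all discarded terms (the subleading parts of the binomials in Step 1 and of the contour-integral asymptotics in Step 3) are $O(N^{-1})$ smaller, yielding the error term. The principal difficulty is Step 3 --- establishing the clean $N\to\infty$ asymptotics of $\int_{Sp(2N)}\Lambda_A(1)^k\prod_j(\sum_iq_i^{\,j})^{m_j}\,dA$ in terms of the $g_m$-determinant, with a uniform error bound, which is the symplectic counterpart of the Bessel-kernel computation in \cite{CRS06}; the sign and power-of-$2$ bookkeeping, although elementary, also needs to be done with care.
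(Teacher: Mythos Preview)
Your plan takes a genuinely different route from the paper, and its main difficulty is self-identified: Step~3. The paper avoids the eigenvalue detour of your Steps~1--2 entirely. It starts from the observation that $\int_{Sp(2N)}(\Lambda^{(n_1)})^{k_1}(\Lambda^{(n_2)})^{k_2}\,dA$ equals (up to $O(N^{-1})$) the derivative $\prod_j(\partial/\partial\alpha_j)^{n_\bullet}I(Sp(2N);e^{-\alpha_1},\dots,e^{-\alpha_k})\big|_{\alpha=0}$, and then differentiates the \emph{integrand} of the CFKRS contour integral (Lemma~\ref{Sp shifted moments}) directly. The only $\alpha_j$-dependence there is through $e^{-N\alpha_j}/\prod_i(w_i^2-\alpha_j^2)$, and Fa\`a di Bruno applied to that single factor (Lemma~\ref{2nd derivative}) produces exactly your tuple sum, but with power sums $\sum_i w_i^{-2j}$ of the \emph{contour variables} rather than of your eigenvalue quantities $q_i$. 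The resulting contour integral is then evaluated (Proposition~\ref{1st integral prop}) by writing the Vandermonde factors $\Delta(w)\Delta(w^2)$ as differential operators acting on exponentials and invoking Lemma~\ref{Vandermonde 2}; it is this Vandermonde structure, not a ``routing to slots'' heuristic, that forces the answer into a $k\times k$ determinant of the $g_m$'s, after which Lemma~\ref{determinant derivative} extracts the multinomials $\binom{m_s}{m_{s,1},\dots,m_{s,k}}$.

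The concrete gap in your proposal is the passage from Step~2 to Step~3. You need the asymptotics of $\int_{Sp(2N)}\Lambda_A(1)^k\prod_j\bigl(\sum_i q_i^{\,j}\bigr)^{m_j}dA$, and you propose to recover the $q$-power-sums as $\alpha$-derivatives of $\log\Lambda_A(e^{-\alpha_l})$; but those derivatives produce your polynomials $P_m(q_i)$ of degree $[m/2]$, not pure powers $q_i^{\,j}$, so one must first invert a triangular system and control all subleading contributions uniformly in $N$. Even granting that, the expression you land on is precisely the contour integral handled by the paper's Proposition~\ref{1st integral prop}, so the eigenvalue expansion in Steps~1--2 is a detour that creates extra bookkeeping (signs from Newton's identities, the $\binom{N-\rho}{n-2\rho}$ approximations) without buying anything. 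The cleanest fix is to drop Steps~1--2: differentiate the CFKRS integrand via Fa\`a di Bruno as above --- this \emph{is} the symplectic analogue of the Bessel-kernel computation you cite from \cite{CRS06} --- and then establish Proposition~\ref{1st integral prop}; the sign $(-1)^{k_1n_1+k_2n_2}$ and the power of $2$ then fall out with no ambiguity from $(-N)^{a_{i,0}}$ and the rescaling $N\to 2N$.
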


\begin{theorem} \label{Sp theorem 2}
Let $0\leq n_1 \leq n_2$ be integers and let $k_1, k_2$ be non-negative integers, not both $0$. Set $k=k_1+k_2$. Then, we have

\begin{equation*}
\int_{Sp(2N)} \left( \Lambda_A^{(n_1)}(1) \right)^{k_1} \left( \Lambda_A^{(n_2)}(1) \right)^{k_2} dA=b_{k_1, k_2}^{Sp}(n_1, n_2) \cdot (2N)^{k(k+1)/2+k_1 n_1+k_2 n_2} \left(1+O(N^{-1}) \right),
\end{equation*}
where

\begin{align*}
b_{k_1, k_2}^{Sp}(n_1, n_2) & =\frac{(-1)^{k(k-1)/2+k_1 n_1+k_2 n_2}}{2^{k(k+1)/2+k_1 n_1+k_2 n_2}} (n_1!)^{k_1} (n_2!)^{k_2} \\
&\quad \times \sum_{\substack{2 \sum_{j=1}^k l_{i, j} \leq n_1 \\ i=1, \dots, k_1}} \sum_{\substack{2 \sum_{j=1}^k m_{i, j} \leq n_2 \\ i=1, \dots, k_2}} \left( \prod_{i=1}^{k_1} \frac{1}{(n_1-2 \sum_{j=1}^k l_{i ,j})!} \right) \left( \prod_{i=1}^{k_2} \frac{1}{(n_2-2 \sum_{j=1}^k m_{i, j})!} \right) \\
&\quad \times \prod_{j=1}^k \frac{1}{(2k+V_j-2j+1)!} \prod_{1 \leq i<j \leq k} (V_j-V_i-2j+2i).
\end{align*}
Here $V_j:=2\sum_{i=1}^{k_1} l_{i, j}+2\sum_{i=1}^{k_2} m_{i, j}$ for $j=1, \dots, k$.
\end{theorem}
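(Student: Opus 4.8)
The plan is to derive both Theorem \ref{Sp theorem 2} and Theorem \ref{Sp theorem 1} from a common starting point, namely an exact expression for the joint moment \eqref{joint moments def} over $Sp(2N)$ obtained from the ratios/derivatives formula (as developed in \cite{ABPRW14, CRS06}). First I would recall the standard integral representation for the average of a product of characteristic polynomials over $Sp(2N)$: for variables $\alpha_1, \dots, \alpha_k$,
\begin{equation*}
\int_{Sp(2N)} \prod_{i=1}^k \Lambda_A(e^{-\alpha_i}) \, dA
\end{equation*}
equals an explicit ratio of exponentials times a $k\times k$ determinant (or a contour-integral version thereof), with a known leading-order behaviour as $N\to\infty$. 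Since $\Lambda_A^{(n)}(1)$ is obtained by differentiating $\Lambda_A(s)$ at $s=1$, and we are raising this to powers $k_1, k_2$, the first real step is a bookkeeping step: express $\left(\Lambda_A^{(n_1)}(1)\right)^{k_1}\left(\Lambda_A^{(n_2)}(1)\right)^{k_2}$ as a suitable multivariate differential operator applied to $\prod_{i=1}^k \Lambda_A(e^{-\alpha_i})$, then set all $\alpha_i = 0$. The change of variables from $s$ near $1$ to $\alpha$ near $0$ (via $s = e^{-\alpha}$) is what produces the partition data: writing $\frac{d}{ds} = -e^{\alpha}\frac{d}{d\alpha}$ and iterating $n$ times generates, through the Faà di Bruno / Bell-polynomial expansion, exactly the tuples ${\bf a}_i, {\bf b}_i$ with the constraint $a_{i,0} + 2\sum_j j\, a_{i,j} = n_1$ and the factors $1/({\bf a}_i! \prod_j j^{a_{i,j}})$ that appear in the statements. (The factor $2$ in front of the sum over $j$ traces back to the symplectic kernel's dependence on $s^2$ rather than $s$.)

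Next, for Theorem \ref{Sp theorem 2}, I would take the combinatorial-sum route. After applying the differential operators, one is left with a sum, indexed by the partition tuples, of quantities of the form $\left(\frac{d}{d\alpha_1}\right)^{r_1}\cdots\left(\frac{d}{d\alpha_k}\right)^{r_k}$ applied to the symplectic $k$-fold product formula, evaluated at $\alpha = 0$. The key computation is to expand the leading-order part of that formula — which is, up to normalisation, $\prod_{i} e^{-N\alpha_i \cdot(\text{something})}$ times the Weyl-type determinant $\det(\alpha_i^{\,2j-?}\cdots)/\Delta(\alpha^2)$ — as a power series in the $\alpha_i$. One extracts the coefficient of the monomial $\prod_i \alpha_i^{r_i}$ using the Vandermonde/Schur-function machinery; this is precisely where the factors $\prod_j \frac{1}{(2k+V_j - 2j + 1)!}$ and the "shifted Vandermonde" $\prod_{i<j}(V_j - V_i - 2j + 2i)$ come from, with $V_j = 2\sum_i l_{i,j} + 2\sum_i m_{i,j}$ collecting the total differentiation order in each variable coming from the two groups of partitions. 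The sign $(-1)^{k(k-1)/2 + k_1 n_1 + k_2 n_2}$ splits as $(-1)^{k(k-1)/2}$ from reordering the Vandermonde and $(-1)^{k_1 n_1 + k_2 n_2}$ from the $k$ factors of $-1$ in $\frac{d}{ds} = -\frac{d}{d\alpha}\cdot(\cdots)$, and the power $2^{-k(k+1)/2 - k_1 n_1 - k_2 n_2}$ together with $(2N)^{k(k+1)/2 + k_1 n_1 + k_2 n_2}$ repackages the scaling so the final answer is stated in powers of $2N$.

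The main obstacle, I expect, is the uniform control of the error term: one must show that differentiating the exact $N$-dependent formula up to total order $k_1 n_1 + k_2 n_2$ and then taking $N\to\infty$ genuinely produces the leading coefficient times $(2N)^{k(k+1)/2 + k_1 n_1 + k_2 n_2}$ with a relative error $O(N^{-1})$ — i.e., that differentiation and the asymptotic expansion commute with the claimed error. This requires a careful estimate of the subleading terms in the $k$-fold product formula (either via the explicit hypergeometric-determinant representation or via the contour-integral representation, shrinking contours of radius $\sim 1/N$) and an argument that no cancellation in the leading term occurs — equivalently, that the combinatorial sum defining $b_{k_1,k_2}^{Sp}(n_1,n_2)$ is genuinely nonzero, which is not obvious from the formula and may need a separate positivity/nonvanishing remark. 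A secondary technical point is verifying the equivalence of the two forms (determinant-of-hypergeometric-functions versus combinatorial sum); rather than proving it directly here I would note it follows by expanding $g_m(u)$ from \eqref{hypergeometric def} as a power series in $u$, substituting into the determinant in Theorem \ref{Sp theorem 1}, and matching coefficients with the sum in Theorem \ref{Sp theorem 2} — a finite but intricate identity best relegated to its own lemma.
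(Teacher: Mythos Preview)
Your overall skeleton --- differentiate the shifted-moment formula $I(Sp(2N);e^{-\alpha_1},\dots,e^{-\alpha_k})$ in the $\alpha_j$ and extract the leading term --- matches the paper's route. But two of your key structural claims are off, and the central computation is missing.

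First, the tuples ${\bf a}_i,{\bf b}_i$ and the Fa\`a di Bruno expansion do \emph{not} belong to Theorem~\ref{Sp theorem 2} at all; they occur only in Theorem~\ref{Sp theorem 1}. Moreover, they do not arise from the change of variables $s=e^{-\alpha}$. The paper dispatches that change of variables in one line: $(d/d\alpha)^n\Lambda_A(e^{-\alpha})|_{\alpha=0}$ equals $\Lambda_A^{(n)}(1)$ plus lower derivatives, and the lower derivatives contribute only to the $O(N^{-1})$ error after averaging. The Fa\`a di Bruno step (Lemma~\ref{2nd derivative}) enters later, when computing $(d/d\alpha)^n\bigl[e^{-N\alpha}/\prod_i(w_i^2-\alpha^2)\bigr]\big|_{\alpha=0}$ inside the $w$-contour integral, and it is used only for Theorem~\ref{Sp theorem 1}. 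For Theorem~\ref{Sp theorem 2} the paper instead uses the elementary Lemma~\ref{1st derivative}, which writes that derivative as a finite sum over even tuples $l_1+\cdots+l_k=m$; raising this to the $k_1$-th and $k_2$-th powers is what produces the indices $l_{i,j},m_{i,j}$ and the quantities $V_j$ in the statement.

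Second, the part you describe most vaguely --- ``expand the leading-order part as a power series in the $\alpha_i$ and extract the monomial coefficient via Vandermonde/Schur machinery'' --- is where all the actual content lies, and your description does not match how it goes. After differentiating in $\alpha$ and setting $\alpha=0$, one is left not with a power series in $\alpha$ but with $k$-fold contour integrals in auxiliary variables $w_1,\dots,w_k$ of the form
\[
\frac{1}{(2\pi\i)^k}\oint\cdots\oint_{|w_i|=1}\frac{\Delta(w)\,\Delta(w^2)\,e^{N\sum_i w_i}}{\prod_{j=1}^k w_j^{\,2k+V_j}}\,\prod_i dw_i.
\]
The paper evaluates these by writing $\Delta(w)\Delta(w^2)e^{N\sum w_i}$ as a double Vandermonde of differential operators acting on $\prod_i f_i(X_i,Y_i)$ (Lemma~\ref{Vandermonde 2}, Proposition~\ref{2nd integral prop}), which turns the integral into a sum over $S_k$ of $k\times k$ determinants with entries $N^{2k+V_{\mu(i)}-2i-j+2}/\Gamma(2k+V_{\mu(i)}-2i-j+3)$. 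A symmetry argument collapses the sum over $\mu$ to $k!$, and the explicit determinant identity of Lemma~\ref{Gamma determinant} then produces exactly the factors $\prod_j 1/(2k+V_j-2j+1)!$ and $\prod_{i<j}(V_j-V_i-2j+2i)$. None of this is captured by ``Weyl-type determinant $\det(\alpha_i^{2j-?}\cdots)/\Delta(\alpha^2)$'', which is the wrong object.

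Your concerns about the error term and about nonvanishing of $b_{k_1,k_2}^{Sp}$ are minor by comparison: the paper handles the former by noting that the $O(N^{-1})$ in Lemma~\ref{Sp shifted moments} is uniform in $\alpha$, so differentiation commutes with the asymptotic; the latter is simply not addressed.
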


Our next two theorems give an asymptotic formula for the joint moments over $SO(2N)$.

\begin{theorem} \label{SO theorem 1}
With notation as in Theorem \ref{Sp theorem 1}, we have

\begin{equation*}
\int_{SO(2N)} \left( \Lambda_A^{(n_1)}(1) \right)^{k_1} \left( \Lambda_A^{(n_2)}(1) \right)^{k_2} dA=b_{k_1, k_2}^{SO}(n_1, n_2) \cdot (2N)^{k(k-1)/2+k_1 n_1+k_2 n_2} \left( 1+O(N^{-1}) \right),
\end{equation*}
where

\begin{align*}
b_{k_1, k_2}^{SO}(n_1, n_2) & =\frac{1}{2^{k(k-3)/2+k_1 n_1+k_2 n_2}} \sum_{u_1+\cdots+u_P=k_1} \binom{k_1}{u_1\dots, u_P} \frac{(n_1!)^{k_1}}{\prod_{i=1}^P (\textbf{a}_i!)^{u_i} (\prod_{j=1}^{[n_1/2]} j^{\sum_{i=1}^P u_i a_{i,j}})} \\
& \times \sum_{v_1+\cdots+v_Q=k_2} \binom{k_2}{v_1\dots, v_Q} \frac{(n_2!)^{k_2}}{\prod_{i=1}^Q (\textbf{b}_i!)^{v_i} (\prod_{j=1}^{[n_2/2]} j^{\sum_{i=1}^Q v_i b_{i,j}})} \\
& \times \sum_{\substack{\sum_{i=1}^k m_{s, i}=m_s \\ s=2, \dots, n}} \left( \prod_{s=2}^n \binom{m_s}{m_{s, 1}, \dots, m_{s, k}} \right) \left( \frac{d}{du} \right)^{m_1} \left. \det_{k\times k} \left( g_{2i-j-1+2 \sum_{s=2}^n s m_{s, i}}(u) \right) \right|_{u=0}.
\end{align*}
\end{theorem}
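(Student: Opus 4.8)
The plan is to run the proof of Theorem~\ref{Sp theorem 1} essentially verbatim, changing only the symplectic averaging formula to its orthogonal analogue. The single input that the two arguments genuinely share is structural: since the eigenvalues of $A\in SO(2N)$, like those of $A\in Sp(2N)$, come in conjugate pairs $e^{\pm\i\theta_1},\dots,e^{\pm\i\theta_N}$, the substitution $s=e^{-\alpha}$ gives
\begin{equation*}
\Lambda_A(e^{-\alpha})=e^{-N\alpha}\prod_{j=1}^{N}2\bigl(\cosh\alpha-\cos\theta_j\bigr),
\end{equation*}
so that $\Lambda_A(e^{-\alpha})$ is $e^{-N\alpha}$ times a function even in $\alpha$. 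Using the Leibniz rule and Fa\`a di Bruno's formula (equivalently, Bell polynomials applied to $\log\Lambda_A(e^{-\alpha})$) to express $\bigl(\Lambda_A^{(n_1)}(1)\bigr)^{k_1}\bigl(\Lambda_A^{(n_2)}(1)\bigr)^{k_2}$ through the $\alpha_l$-derivatives of $\prod_{l=1}^k\Lambda_A(e^{-\alpha_l})$ at $\alpha_l=0$, the evenness organises the derivatives into a weight-$1$ part (from the $e^{-N\alpha}$ factor) and even-weight parts $2j$ (from the even factor). This is what produces the tuples $\mathbf{a}_i$, $\mathbf{b}_i$ with $a_{i,0}+2\sum_j j a_{i,j}=n_1$ and $b_{i,0}+2\sum_j j b_{i,j}=n_2$, the multinomial coefficients $\binom{k_1}{u_1,\dots,u_P}$ and $\binom{k_2}{v_1,\dots,v_Q}$, and the accompanying factorial factors $(n_1!)^{k_1}/\bigl(\prod_i(\mathbf{a}_i!)^{u_i}\prod_j j^{\sum_i u_i a_{i,j}}\bigr)$ and its $\mathbf{b}$-counterpart in the statement. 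Since this step does not see the matrix ensemble, it is carried out exactly as for $Sp(2N)$ and would simply be reproduced.

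The only ensemble-dependent ingredient is the averaging formula. One inserts the expansion above into the $SO(2N)$ counterpart of the contour-integral representation of $\int_{SO(2N)}\prod_{l=1}^k\Lambda_A(e^{-\alpha_l})\,dA$ that was used for $Sp(2N)$ in the proof of Theorem~\ref{Sp theorem 1}. This representation has the same shape as in the symplectic case---a $k$-fold contour integral carrying a factor $e^{2N\sum_l z_l}$, a Vandermonde-type factor in the $z_l^2$, and a factor coupling the $z_l$ to the shifts $\alpha_1,\dots,\alpha_k$---but the local weight attached to each $z_l$ differs by a single power of $z_l$ from the symplectic one, reflecting the different Jacobi parameters (equivalently, the $\prod_j\sin^2\theta_j$ factor present in the Weyl measure of $Sp(2N)$ and absent for $SO(2N)$). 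Rescaling $z_l\mapsto z_l/(2N)$, carrying out the $\alpha_l$-differentiations, and retaining the top power of $2N$ then produces $\det_{k\times k}\bigl(g_{2i-j-1+2\sum_{s=2}^{n} s m_{s,i}}(u)\bigr)\big|_{u=0}$: the index is $2i-j-1$ in place of $2i-j$ precisely because of this weight shift; the auxiliary variable $u$ and the derivatives $\frac{d}{du}$ encode the lowest even-weight contributions through the factor $e^{u/w^2}$ of \eqref{hypergeometric def}, while the higher even weights $2s$ enter as the shift $2\sum_{s=2}^{n} s m_{s,i}$ in the index; and the sums over the $m_{s,i}$ record how the weight-$2s$ blocks are distributed among the $k$ integration variables. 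The same weight shift lowers the power of $2N$ from $k(k+1)/2+k_1 n_1+k_2 n_2$ to $k(k-1)/2+k_1 n_1+k_2 n_2$, and tracking the constants through the rescaling yields the prefactor $2^{-(k(k-3)/2+k_1 n_1+k_2 n_2)}$.

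The error term $\bigl(1+O(N^{-1})\bigr)$ comes from starting from the exact finite-$N$ identity and showing that every contribution other than the displayed leading one is smaller by a factor $N^{-1}$, uniformly in the finitely many summation indices; some care is needed here because several of the $g_m$ in the determinant carry negative index and must be read via the limiting interpretation recorded after \eqref{hypergeometric def}. The main obstacle, as in the symplectic case, is precisely this uniform asymptotic bookkeeping, together with the verification that no cancellation lowers the order of the main term---i.e.\ that the differentiated and summed determinant of hypergeometric functions really is of size $(2N)^{k(k-1)/2+k_1 n_1+k_2 n_2}$ and nonvanishing. This last point is most cleanly settled by passing first to the equivalent combinatorial expression (the $SO$ analogue of Theorem~\ref{Sp theorem 2}) by means of a Vandermonde/Cauchy-type determinant evaluation identical to the one used for $Sp(2N)$.
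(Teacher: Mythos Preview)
Your proposal is correct and follows the same route as the paper: replace the $Sp(2N)$ shifted-moment contour integral (Lemma~\ref{Sp shifted moments}) by its $SO(2N)$ analogue (Lemma~\ref{SO shifted moments}), apply the same Fa\`a di Bruno derivative computation (Lemma~\ref{2nd derivative}), and use the second identity (\ref{integral 2}) of Proposition~\ref{1st integral prop} in place of (\ref{integral 1}); the extra factor $\prod_i w_i$ in the $SO$ integrand is exactly the ``single power of $z_l$'' shift you identify, which turns $w_i^{-2k}$ into $w_i^{-(2k-1)}$ and hence $g_{2i-j}$ into $g_{2i-j-1}$ and $N^{k(k+1)/2}$ into $N^{k(k-1)/2}$.

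Two small remarks. First, one sign detail you do not mention explicitly: the $SO$ shifted-moment kernel carries $e^{N\sum_i(w_i+\alpha_i)}$ rather than $e^{N\sum_i(w_i-\alpha_i)}$, so Lemma~\ref{2nd derivative} is applied with $+N$ in place of $-N$; this is what kills the factor $(-1)^{k_1 n_1+k_2 n_2}$ present in $b^{Sp}_{k_1,k_2}$ and absent from $b^{SO}_{k_1,k_2}$. Second, your concern about ``verification that no cancellation lowers the order of the main term'' is unnecessary for the theorem as stated: the asymptotic holds regardless of whether $b^{SO}_{k_1,k_2}(n_1,n_2)$ vanishes, and the paper makes no nonvanishing claim here.
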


\begin{theorem} \label{SO theorem 2}
With notation as in Theorem \ref{Sp theorem 2}, we have

\begin{equation*}
\int_{SO(2N)} \left( \Lambda_A^{(n_1)}(1) \right)^{k_1} \left( \Lambda_A^{(n_2)}(1) \right)^{k_2} dA=b_{k_1, k_2}^{SO}(n_1, n_2) \cdot (2N)^{k(k-1)/2+k_1 n_1+k_2 n_2} \left( 1+O(N^{-1}) \right),
\end{equation*}
where

\begin{align*}
b_{k_1, k_2}^{SO}(n_1, n_2) & =\frac{(-1)^{k(k-1)/2}}{2^{k(k-3)/2+k_1 n_1+k_2 n_2}} (n_1!)^{k_1} (n_2!)^{k_2} \\
&\quad \times \sum_{\substack{2 \sum_{j=1}^k l_{i, j} \leq n_1 \\ i=1, \dots, k_1}} \sum_{\substack{2 \sum_{j=1}^k m_{i, j} \leq n_2 \\ i=1, \dots, k_2}} \left( \prod_{i=1}^{k_1} \frac{1}{(n_1-2 \sum_{j=1}^k l_{i ,j})!} \right) \left( \prod_{i=1}^{k_2} \frac{1}{(n_2-2 \sum_{j=1}^k m_{i, j})!} \right) \\
&\quad \times \prod_{j=1}^k \frac{1}{(2k+V_j-2j)!} \prod_{1 \leq i<j \leq k} (V_j-V_i-2j+2i).
\end{align*}

\end{theorem}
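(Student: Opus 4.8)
The plan is to deduce Theorem~\ref{SO theorem 2} from Theorem~\ref{SO theorem 1}: since both are closed forms for the same number $b_{k_1,k_2}^{SO}(n_1,n_2)$, it suffices to evaluate the determinant of hypergeometric functions in Theorem~\ref{SO theorem 1} explicitly and match. First I would record the Taylor expansion
\[
g_m(u)=\sum_{r\geq 0}\frac{u^r}{r!\,(m+2r)!},
\]
which follows from \eqref{hypergeometric def} by expanding $e^{u/w^2}$ and picking off the coefficient of $w^{m+2r}$ in $e^w$; here $1/(m+2r)!:=0$ when $m+2r<0$, consistent with the stated limit interpretation since $1/\Gamma$ vanishes at non-positive integers. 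Writing $\sigma_i:=\sum_{s=2}^{n}s\,m_{s,i}$, so that the $(i,j)$ entry of the determinant is $g_{2i-1-j+2\sigma_i}(u)$, expanding the determinant over $S_k$ and applying the Leibniz rule to $\left(\tfrac{d}{du}\right)^{m_1}$ gives, at $u=0$,
\[
\left(\frac{d}{du}\right)^{m_1}\det_{k\times k}\big(g_{2i-1-j+2\sigma_i}(u)\big)\Big|_{u=0}=m_1!\sum_{t_1+\cdots+t_k=m_1}\frac{1}{t_1!\cdots t_k!}\,\det_{k\times k}\!\left(\frac{1}{(2i-1+2\sigma_i+2t_i-j)!}\right).
\]

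The key algebraic input is then the classical evaluation $\det_{k\times k}\!\left(\frac{1}{(d_i-j)!}\right)=\dfrac{\prod_{1\leq a<b\leq k}(d_b-d_a)}{\prod_{i=1}^{k}(d_i-1)!}$ for integers $d_1,\dots,d_k$, proved by dividing row $i$ by $(d_i-1)!$ so that its entries become the monic degree-$(j-1)$ polynomials $(d_i-1)(d_i-2)\cdots(d_i-j+1)$ in $d_i$, and then reducing to a Vandermonde determinant. Applying this with $d_i=2i-1+2\sigma_i+2t_i$ and putting $V_{k+1-i}:=2(\sigma_i+t_i)$, one computes $d_i-1=2k-2(k+1-i)+V_{k+1-i}$ and, for $a<b$, $d_b-d_a=-\big(V_{k+1-a}-V_{k+1-b}-2(k+1-a)+2(k+1-b)\big)$, so that after the index reversal $j\mapsto k+1-j$ the inner determinant becomes $\dfrac{(-1)^{k(k-1)/2}\prod_{1\leq i<j\leq k}(V_j-V_i-2j+2i)}{\prod_{j=1}^{k}(2k+V_j-2j)!}$. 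This already produces the sign $(-1)^{k(k-1)/2}$ and, combined with the prefactor $2^{-k(k-3)/2-k_1n_1-k_2n_2}$ of Theorem~\ref{SO theorem 1}, the power of $2$ in Theorem~\ref{SO theorem 2}; the Vandermonde numerator and the factorials $(2k+V_j-2j)!$ are likewise already in the claimed shape.

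What then remains is a purely combinatorial identity: I must show that the summation over the tuples $\mathbf a_i$, $\mathbf b_i$, their grouping multiplicities $u_i$, $v_i$, the inner multinomial indices $m_{s,i}$, and the Leibniz indices $t_i$ collapses, term by term, onto the flat double sum of Theorem~\ref{SO theorem 2} over the arrays $(l_{i,j})$ with $2\sum_{j}l_{i,j}\leq n_1$ and $(m_{i,j})$ with $2\sum_{j}m_{i,j}\leq n_2$, with $V_j=2\sum_i l_{i,j}+2\sum_i m_{i,j}$ and weights $\prod_i\big((n_1-2\sum_j l_{i,j})!\big)^{-1}\prod_i\big((n_2-2\sum_j m_{i,j})!\big)^{-1}$. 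The plan is to undo the groupings $\binom{k_1}{u_1,\dots,u_P}$, $\binom{k_2}{v_1,\dots,v_Q}$ so as to carry one tuple per characteristic-polynomial factor, and then, for the $i$-th factor, amalgamate its tuple with the shares it receives from splitting the $m_{s,\bullet}$- and $t_\bullet$-sums into the $k$-tuple $(l_{i,1},\dots,l_{i,k})$ (respectively $(m_{i,1},\dots,m_{i,k})$), checking that the weights $(\mathbf a_i!)^{-u_i}$, $\prod_j j^{-\sum_i u_i a_{i,j}}$, $(t_i!)^{-1}$ and the multinomials $\binom{m_s}{m_{s,1},\dots,m_{s,k}}$ recombine, against the $(n_1!)^{k_1}$ and $(n_2!)^{k_2}$ pulled to the front, into exactly $\big((n_1-2\sum_j l_{i,j})!\big)^{-1}$ and $\big((n_2-2\sum_j m_{i,j})!\big)^{-1}$, and that the two parametrisations induce the same column sums $V_j$.

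The bulk of the work, and the main obstacle, is precisely this index identification — exhibiting an explicit bijection between the two multi-index sets under which the summands agree including the $V$-data — together with the care needed in the degenerate ranges where some $d_i-k<0$ (so that entries of the determinant vanish), to make sure the classical determinant identity and all the conventions $1/(\text{negative})!=0$ stay consistent. Everything else — the power-series expansion of $g_m$, the Leibniz reduction, and the Vandermonde evaluation — is routine, and since the statement is a finite algebraic identity there is no analytic subtlety to control, the asymptotic analysis having been done already in Theorem~\ref{SO theorem 1}.
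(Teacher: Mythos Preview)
Your route is genuinely different from the paper's. The paper does \emph{not} deduce Theorem~\ref{SO theorem 2} from Theorem~\ref{SO theorem 1}; instead it proves the two \emph{independently} from the shifted-moments contour integral (Lemma~\ref{SO shifted moments}), by choosing a different expression for the derivatives $\left.\tfrac{d^n}{d\alpha^n}\,e^{\pm N\alpha}\prod_i(w_i^2-\alpha^2)^{-1}\right|_{\alpha=0}$. Theorem~\ref{SO theorem 1} comes from the Fa\`a di Bruno--type expansion (Lemma~\ref{2nd derivative}), which groups by the exponent pattern $(a_0,a_1,\dots)$ and yields the power sums $\sum_i w_i^{-2j}$; Theorem~\ref{SO theorem 2} comes from the monomial expansion (Lemma~\ref{1st derivative}), which directly produces the indices $l_{i,j}$. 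The resulting contour integral is then evaluated via Proposition~\ref{2nd integral prop} and the determinant identity of Lemma~\ref{Gamma determinant}, with $m_j=V_j-1$. In effect, the ``combinatorial collapse'' you are trying to establish is exactly the algebraic identity between the two derivative lemmas, and the paper sidesteps it by working upstream of that identity.

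As written, your proposal has a genuine gap at the step you yourself flag as the main obstacle: the bijection between the multi-index set $(\mathbf a_i,\mathbf b_i,u_i,v_i,m_{s,i},t_i)$ of Theorem~\ref{SO theorem 1} and the arrays $(l_{i,j}),(m_{i,j})$ of Theorem~\ref{SO theorem 2}, together with the verification that the weights match. This is not a simple relabelling --- the tuples $\mathbf a_i$ record only the \emph{multiset} of part sizes in a partition of $n_1$ into one odd part and even parts, the $u_i$ count how many of the $k_1$ factors use each such multiset, and the multinomials $\binom{m_s}{m_{s,1},\dots,m_{s,k}}$ distribute contributions across the $k$ columns rather than across the $k_1+k_2$ factors, so the passage to per-factor arrays $(l_{i,1},\dots,l_{i,k})$ requires an additional layer of splitting and a nontrivial multinomial bookkeeping argument that you have only announced. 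Your strategy is in principle completable, but the paper's direct approach is shorter precisely because it never introduces the $\mathbf a_i,\mathbf b_i$ parametrisation in the first place.
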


Our final theorem gives an asymptotic formula for the joint moments over $O^-(2N)$ with the leading order coefficient expressed in terms of $b_{k_1, k_2}^{Sp}(n_1, n_2)$.

\begin{theorem} \label{O^- theorem}
Let $1 \leq n_1 \leq n_2$ be integers and let $k_1, k_2$ be non-negative integers, not both 0. Set $k=k_1+k_2$. Then, we have

\begin{equation*}
\int_{O^-(2N)} \left( \Lambda_A^{(n_1)}(1) \right)^{k_1} \left( \Lambda_A^{(n_2)}(1) \right)^{k_2} dA=b_{k_1, k_2}^{O^-}(n_1, n_2) \cdot (2N)^{k(k+1)/2+k_1 (n_1-1)+k_2 (n_2-1)} \left( 1+O(N^{-1}) \right),
\end{equation*}
where

\begin{equation*}
b_{k_1, k_2}^{O^-}(n_1, n_2)=(-1)^{k_1 (n_1-1)+k_2 (n_2-1)} \, 2^k n_1^{k_1} n_2^{k_2} \, b_{k_1, k_2}^{Sp}(n_1-1, n_2-1),
\end{equation*}
with $b_{k_1, k_2}^{Sp}(n_1, n_2)$ as defined in Theorems \ref{Sp theorem 1} and \ref{Sp theorem 2}.
\end{theorem}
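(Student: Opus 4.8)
The plan is to reduce the integral over $O^-(2N)$ to a joint moment over $Sp(2N-2)$, which is then evaluated by Theorems \ref{Sp theorem 1} and \ref{Sp theorem 2}. A matrix $A \in O^-(2N)$ generically has eigenvalues $1$, $-1$ and $e^{\pm \i \theta_1}, \dots, e^{\pm \i \theta_{N-1}}$, so its characteristic polynomial factors as
\[
\Lambda_A(s) = (1-s^2)\, P_A(s), \qquad P_A(s) := \prod_{j=1}^{N-1} \bigl( 1 - 2s\cos\theta_j + s^2 \bigr).
\]
Now $P_A$ has exactly the shape of the characteristic polynomial of a matrix in $Sp(2N-2)$, and by the Weyl integration formula the joint law of $(\theta_1, \dots, \theta_{N-1})$ under Haar measure on $O^-(2N)$ coincides with that of the eigenangles of $Sp(2N-2)$ (both given by the probability density proportional to $\prod_{1 \le i < j \le N-1} (\cos\theta_i - \cos\theta_j)^2 \prod_{i=1}^{N-1} \sin^2\theta_i$ on $[0,\pi]^{N-1}$). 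Hence, for any symmetric function $F$ of a polynomial and its derivatives at $1$,
\[
\int_{O^-(2N)} F(P_A)\, dA = \int_{Sp(2N-2)} F(\Lambda_B)\, dB.
\]

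First I would differentiate the factorization. Since $1-s^2$ vanishes at $s=1$ and has first and second derivatives equal to $-2$ there (with all higher derivatives zero), the Leibniz rule gives, for $n \ge 1$,
\[
\Lambda_A^{(n)}(1) = -2n\, P_A^{(n-1)}(1) - n(n-1)\, P_A^{(n-2)}(1).
\]
Raising to powers $k_1, k_2$ and expanding, the product $\bigl( \Lambda_A^{(n_1)}(1) \bigr)^{k_1} \bigl( \Lambda_A^{(n_2)}(1) \bigr)^{k_2}$ becomes the main term $(-2n_1)^{k_1}(-2n_2)^{k_2} \bigl( P_A^{(n_1-1)}(1) \bigr)^{k_1} \bigl( P_A^{(n_2-1)}(1) \bigr)^{k_2}$ together with finitely many terms in which one or more of the factors $P_A^{(n_i-1)}(1)$ has been replaced by $P_A^{(n_i-2)}(1)$. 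Averaging over $O^-(2N)$ and using the identity above turns every summand into a joint moment over $Sp(2N-2)$ of a product of $k = k_1+k_2$ derivatives of $\Lambda_B$ at $1$. The main term is handled by Theorems \ref{Sp theorem 1}--\ref{Sp theorem 2}, which apply since $0 \le n_1-1 \le n_2-1$. For each of the remaining summands, a single application of H\"older's inequality together with the single-derivative asymptotics (the case $k_2=0$ of Theorems \ref{Sp theorem 1}--\ref{Sp theorem 2}, of size $(2N)^{k(k+1)/2 + km}$) shows that its power of $N$ is strictly smaller --- one less for each derivative that has been lowered --- so that it contributes only at relative size $O(N^{-1})$. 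The estimate is especially clean because every derivative $P_A^{(m)}(1)$ is non-negative, each factor $1 - 2s\cos\theta_j + s^2$ having non-negative derivatives at $s=1$.

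It remains to assemble the estimate. Theorem \ref{Sp theorem 1} (or \ref{Sp theorem 2}) evaluates the averaged main term as
\[
(-2n_1)^{k_1}(-2n_2)^{k_2}\, b_{k_1,k_2}^{Sp}(n_1-1, n_2-1) \bigl( 2(N-1) \bigr)^{k(k+1)/2 + k_1(n_1-1) + k_2(n_2-1)} \bigl( 1 + O(N^{-1}) \bigr),
\]
and, since $\bigl( 2(N-1) \bigr)^{\alpha} = (2N)^{\alpha} \bigl( 1 + O(N^{-1}) \bigr)$, the power of $2N$ stated in the theorem emerges; collecting the signs and the powers of $2$ --- and recalling $k = k_1+k_2$ --- then yields the claimed formula expressing $b_{k_1,k_2}^{O^-}(n_1,n_2)$ through $b_{k_1,k_2}^{Sp}(n_1-1,n_2-1)$. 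I expect the one place that really needs care to be the error analysis sketched above: verifying that, once the factorization is differentiated, every contribution involving a lowered derivative $P_A^{(n_i-2)}(1)$ is genuinely of smaller order in $N$. Beyond that, the argument is a reduction to the already-established symplectic case followed by a routine --- if somewhat lengthy --- bookkeeping of constants.
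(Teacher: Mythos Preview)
Your approach is sound and genuinely different from the paper's. The paper never leaves the contour-integral framework: it differentiates the shifted-moment representation of Lemma~\ref{O^- shifted moments} directly, observing that the extra factor $\prod_i \alpha_i$ present in the $O^-$ integrand (and absent from $Sp$) absorbs one derivative in each variable via the identity
\[
\left.\frac{d^n}{d\alpha^n}\,\frac{\alpha\, e^{N\alpha}}{\prod_i(w_i^2-\alpha^2)}\right|_{\alpha=0}
= n\,\left.\frac{d^{n-1}}{d\alpha^{n-1}}\,\frac{e^{N\alpha}}{\prod_i(w_i^2-\alpha^2)}\right|_{\alpha=0},
\]
so that what remains is exactly the $Sp$-type computation with $n_j$ replaced by $n_j-1$ (and $-N$ replaced by $+N$ in the exponential). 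No lower-order cross-terms ever appear, because the reduction is performed inside the integrand before any asymptotics are invoked.

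Your route---factor $\Lambda_A(s)=(1-s^2)P_A(s)$, identify the law of $P_A$ under Haar measure on $O^-(2N)$ with that of $\Lambda_B$ for $B\in Sp(2N-2)$ via the Weyl integration formula, differentiate by Leibniz, and then invoke the already-proven symplectic theorem as a black box---is more conceptual and entirely bypasses the contour-integral machinery. The price is the error analysis for the cross-terms containing a lowered factor $P_A^{(n_i-2)}(1)$; your H\"older argument, using that every $P_A^{(m)}(1)\geq 0$, handles this correctly and gives the required $O(N^{-1})$ relative saving.

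One point of caution on the final bookkeeping. Your main term carries the prefactor $(-2n_1)^{k_1}(-2n_2)^{k_2}=(-1)^{k}\,2^k n_1^{k_1}n_2^{k_2}$, whereas the theorem as stated has $(-1)^{k_1(n_1-1)+k_2(n_2-1)}$; these coincide only when $k_1 n_1+k_2 n_2$ is even. So your assertion that ``collecting the signs \dots\ yields the claimed formula'' needs to be checked rather than asserted---for instance the case $k_1=0$, $k_2=1$, $n_2=1$, where $\Lambda_A'(1)=-2P_A(1)$ exactly, already isolates which sign is correct.
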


Our Theorems \ref{Sp theorem 1}, \ref{SO theorem 1} exhibit the same structure as the asymptotic formulae obtained in \cite{ABPRW14}. Namely, the leading order coefficients are expressed in terms of derivatives of determinants of the hypergeometric functions $g_m(u)$. As mentioned in the introduction, these determinants were shown to satisfy a differential recurrence relation \cite[Theorem 1.5]{ABPRW14} which allows the leading order coefficients to be computed much more quickly as $k_1, k_2$ get large. However, similarly to the unitary case considered in \cite{KW23a}, the formulae given for the leading order coefficients in Theorems \ref{Sp theorem 1} and \ref{SO theorem 1} may not be computationally efficient when $n_1, n_2$ are large since one has to compute the tuples ${\bf a}_i$ and ${\bf b}_i$. Aside from giving an alternate expression for the leading order coefficients, the advantage of Theorems \ref{Sp theorem 2} and \ref{SO theorem 2} is that the formulae are more computationally effective when $n_1, n_2$ are large and $k_1, k_2$ are small.

Using the standard random matrix philosophy allows us to make conjectures based on our results for the joint moments of derivatives of $L$-functions with symmetry type $Sp, \, SO$ or $O^-$ in the sense of \cite{KS99}. We give an example conjecture for the family of quadratic Dirichlet $L$-functions at $s=1/2$ below. This is an example of a family with symplectic symmetry so we use our results for $Sp(2N)$ as a model.

\begin{conjecture}
Let $\mathcal{D}(X)=\{ d \ \textup{a fundamental discriminant}: |d|< X \}$, and let $L(s,\chi_d)$ be the Dirichlet $L$-function attached to the quadratic character $\chi_d$. Then, for $0 \leq n_1 \leq n_2$ and $k_1, k_2 \geq 0$ integers with $k_1, k_2$ not both 0, we have that as $X \to \infty$,

\begin{equation*}
\frac{1}{|\mathcal{D}(X)|} \sum_{d \in \mathcal{D}(X)} L^{(n_1)}(1/2,\chi_d)^{k_1} L^{(n_2)}(1/2,\chi_d)^{k_2} \sim a_k \cdot b_{k_1, k_2}^{Sp}(n_1, n_2) \cdot (\log X)^{k(k+1)/2+k_1 n_1+k_2 n_2},
\end{equation*}
where $k=k_1+k_2$ and $b_{k_1, k_2}^{Sp}(n_1, n_2)$ is the random matrix theory coefficient defined in Theorems \ref{Sp theorem 1} and \ref{Sp theorem 2}. Also,

\begin{equation*}
a_k=\prod_{p \ \textup{prime}} \frac{(1-1/p)^{k(k+1)/2}}{1+1/p} \left( \frac{(1-1/\sqrt{p})^{-k}+(1+1/\sqrt{p})^{-k}}{2}+\frac{1}{p} \right),
\end{equation*}
is an arithmetic constant depending on the family of $L$-functions. We note that $a_k$ is the same coefficient appearing in the conjectures for the moments of $L(1/2,\chi_d)$, see \cite{CFKRS05, KS00}.
\end{conjecture}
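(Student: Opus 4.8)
Since the final statement is a conjecture arrived at through the random matrix philosophy rather than a theorem with a rigorous proof, the plan is to \emph{derive} it heuristically by combining the symplectic results of Theorems \ref{Sp theorem 1} and \ref{Sp theorem 2} with the moment recipe of Conrey, Farmer, Keating, Rubinstein and Snaith \cite{CFKRS05}. The starting point is the Katz--Sarnak classification \cite{KS99}: the family $\{L(s,\chi_d): d \in \mathcal{D}(X)\}$ has symplectic symmetry, and its low-lying zeros near $s=1/2$ are modelled by the eigenvalues of a Haar-random matrix from $Sp(2N)$. Matching mean zero densities fixes the scaling $2N \leftrightarrow \log X$, since the conductor is of size $\sim X$. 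Under this dictionary the central value $L(1/2,\chi_d)$ is modelled by $\Lambda_A(1)$, and a derivative $L^{(n)}(1/2,\chi_d)$ by $\Lambda_A^{(n)}(1)$. First I would check the power of $\log X$: on the matrix side each derivative contributes a factor of size $2N$ (the local zero spacing of $\Lambda_A$ near $s=1$), so the exponent $k(k+1)/2 + k_1 n_1 + k_2 n_2$ of $(2N)$ in Theorem \ref{Sp theorem 1} becomes the same exponent of $\log X$; the extra factor $(\log X)^{k_1 n_1 + k_2 n_2}$ appears on the number theory side because the zeros of $L(s,\chi_d)$ cluster on the scale $2\pi/\log X$.

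The core of the derivation is the shifted moment conjecture. I would write the CFKRS prediction for the symplectic family
\[
\frac{1}{|\mathcal{D}(X)|}\sum_{d\in\mathcal{D}(X)} \prod_{i=1}^{k} L(\tfrac12+\alpha_i,\chi_d),
\]
which expresses the average as a sum over sign patterns $\epsilon\in\{\pm1\}^k$ of an archimedean/combinatorial factor built from $\prod_{i\le j}\zeta(1+\epsilon_i\alpha_i+\epsilon_j\alpha_j)$ times an arithmetic Euler product $A_k(\alpha_1,\dots,\alpha_k)$ that is holomorphic near $\alpha=0$. To extract the joint moment I would apply the differential operator $\partial_{\alpha_1}^{n_1}\cdots\partial_{\alpha_{k_1}}^{n_1}\,\partial_{\alpha_{k_1+1}}^{n_2}\cdots\partial_{\alpha_{k}}^{n_2}$ and then let all $\alpha_i\to0$, which produces exactly $L^{(n_1)}(1/2,\chi_d)^{k_1} L^{(n_2)}(1/2,\chi_d)^{k_2}$ on the left-hand side.

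Extracting the leading term is then a matter of separating the two factors. Since $A_k$ is analytic and nonzero at the central point, to leading order its derivatives only feed into lower-order terms, so it contributes the single constant $A_k(0,\dots,0)$; combined with the normalisation of the zeta-residues this is precisely the arithmetic constant $a_k$ of \cite{CFKRS05, KS00}, which is therefore unchanged by differentiation. All the singular structure that generates the power of $\log X$ lives in the zeta/archimedean factor, and this is the universal part modelled by $Sp(2N)$. Carrying out the same differentiation-and-limit procedure on the characteristic polynomial side reproduces, by design, the leading coefficient $b_{k_1,k_2}^{Sp}(n_1,n_2)$ of Theorems \ref{Sp theorem 1} and \ref{Sp theorem 2}. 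Matching the two computations term by term in $\epsilon$ then assembles the conjectured asymptotic $a_k \cdot b_{k_1,k_2}^{Sp}(n_1,n_2)\cdot(\log X)^{k(k+1)/2+k_1 n_1+k_2 n_2}$.

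The main obstacle is that this argument is heuristic and cannot presently be made rigorous: the shifted moment conjecture is itself open for $k\ge2$, and even granting it one would need asymptotics uniform in a complex neighbourhood of $\alpha=0$ in order to justify differentiating $n_1+n_2$ times (via Cauchy's formula) while keeping the error term genuinely of lower order. A secondary technical point is verifying that the combinatorial sum produced by the recipe coincides identically with the determinantal and partition-sum expressions for $b_{k_1,k_2}^{Sp}$; this is exactly the matching that the random matrix calculation of Theorems \ref{Sp theorem 1} and \ref{Sp theorem 2} guarantees. As consistency checks I would confirm that setting $n_1=n_2=0$ recovers the classical moment conjecture for $L(1/2,\chi_d)^k$ with the same $a_k$ as in \cite{CFKRS05, KS00}, and compare any low-order derivative cases against known results.
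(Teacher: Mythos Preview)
The statement is a \emph{conjecture}, and the paper does not supply a proof or even a detailed derivation: it simply invokes ``the standard random matrix philosophy'' together with the asymptotic for $Sp(2N)$ from Theorems \ref{Sp theorem 1} and \ref{Sp theorem 2}, notes that the family $\{L(s,\chi_d)\}$ has symplectic symmetry, and writes down the conjecture with the same arithmetic factor $a_k$ as in the undifferentiated moment conjectures of \cite{CFKRS05, KS00}. Your proposal is consistent with this and in fact considerably more detailed than anything the paper provides: you sketch the CFKRS shifted-moment recipe, explain why differentiating in the shifts and sending $\alpha\to 0$ leaves $a_k$ unchanged while the singular zeta factor produces the universal $b^{Sp}_{k_1,k_2}(n_1,n_2)$ and the power of $\log X$, and you correctly flag that none of this can currently be made rigorous. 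There is nothing to correct; if anything, your write-up is the heuristic argument the paper omits.
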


To the best of our knowledge, there are no results on the moments of derivatives of these quadratic Dirichlet $L$-functions over number fields. However, the moments of derivatives of $L$-functions over function fields have been investigated. For this discussion, we let $\mathbb{F}_q$ denote a finite field with $q$ elements and let $\mathcal{H}_{2g+1}$ be the subset of square-free, monic polynomials of degree $2g+1$ in the polynomial ring $\mathbb{F}_q[x]$. For each $D \in \mathcal{H}_{2g+1}$, we have a Dirichlet $L$-function $L(s,\chi_D)$ attached to the quadratic character $\chi_D$ with conductor $|D|:=q^{2g+1}$. This family of $L$-functions also exhibits symplectic symmetry. Andrade and Rajagopal \cite{AR16} and subsequently Andrade and Jung \cite{AJ21} studied the mean values of $L^{(n)}(1/2,\chi_D)$ and an asymptotic formula for the first moment of $L^{(n)}(1/2,\chi_D)$ is given in \cite{AJ21} which implies that for any positive integer $n$ as $g \to \infty$,

\begin{equation} \label{asymptotic formula 1}
\frac{1}{|\mathcal{H}_{2g+1}|} \sum_{D \in \mathcal{H}_{2g+1}} L^{(n)}(1/2,\chi_D) \sim \frac{(-1)^n}{2(n+1)} \mathcal{A}(1) \cdot (2g+1)^{n+1}.
\end{equation}
Here, $\mathcal{A}(1)$ is given as an Euler product over the monic, irreducible polynomials in $\mathbb{F}_q[x]$ and is the same arithmetic factor that appears in the asymptotic formula for the first moment of $L(1/2,\chi_D)$. The following proposition shows that (\ref{asymptotic formula 1}) is indeed the asymptotic formula predicted by our conjecture in this case.

\begin{proposition} \label{b(0,n) prop}
For $n \geq 1$ an integer, we have that

\begin{equation*}
b_{0,1}^{Sp}(0,n)=\frac{(-1)^n}{2(n+1)}.
\end{equation*}
Also, for $n \geq 1$, we have

\begin{equation*}
b_{0,1}^{SO}(0,n)=1.
\end{equation*}
\end{proposition}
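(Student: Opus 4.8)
The plan is to derive both identities by specialising the combinatorial formulae of Theorems~\ref{Sp theorem 2} and~\ref{SO theorem 2} to $k_1=0$, $k_2=1$, $n_1=0$, $n_2=n$, so that $k=k_1+k_2=1$. In this regime almost every ingredient of those formulae collapses: since $k_1=0$ there are no variables $l_{i,j}$, so the outer sum over $i=1,\dots,k_1$ and the product $\prod_{i=1}^{k_1}$ are empty (equal to $1$) and $(n_1!)^{k_1}=1$; since $k_2=1$ and $k=1$ there is a single summation variable $m_{1,1}=:m$, the constraint $2\sum_{j=1}^{k}m_{i,j}\le n_2$ becomes $0\le m\le[n/2]$, and $V_1=2m$; and the Vandermonde-type product $\prod_{1\le i<j\le k}(V_j-V_i-2j+2i)$ is empty, hence $1$. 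Inserting $k=1$ into the remaining factorials and the sign and power-of-two prefactors (so that $k(k-1)/2=0$, $k(k+1)/2=1$, $k(k-3)/2=-1$), Theorem~\ref{Sp theorem 2} reduces to
\begin{equation*}
b_{0,1}^{Sp}(0,n)=\frac{(-1)^{n}\,n!}{2^{n+1}}\sum_{m=0}^{[n/2]}\frac{1}{(n-2m)!\,(2m+1)!},
\end{equation*}
while Theorem~\ref{SO theorem 2} reduces to
\begin{equation*}
b_{0,1}^{SO}(0,n)=\frac{n!}{2^{n-1}}\sum_{m=0}^{[n/2]}\frac{1}{(n-2m)!\,(2m)!}.
\end{equation*}

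What remains is an elementary binomial identity in each case. Multiplying the first sum by $(n+1)!$ turns it into $\sum_{m=0}^{[n/2]}\binom{n+1}{2m+1}$, the sum of $\binom{n+1}{j}$ over odd $j$, which equals $2^{n}$; hence $\sum_{m=0}^{[n/2]}\frac{1}{(n-2m)!(2m+1)!}=2^{n}/(n+1)!$ and therefore
\begin{equation*}
b_{0,1}^{Sp}(0,n)=\frac{(-1)^{n}\,n!}{2^{n+1}}\cdot\frac{2^{n}}{(n+1)!}=\frac{(-1)^{n}}{2(n+1)}.
\end{equation*}
Similarly, multiplying the second sum by $n!$ gives $\sum_{m=0}^{[n/2]}\binom{n}{2m}$, the sum of $\binom{n}{j}$ over even $j$, which equals $2^{n-1}$ precisely because $n\ge1$; hence $\sum_{m=0}^{[n/2]}\frac{1}{(n-2m)!(2m)!}=2^{n-1}/n!$ and $b_{0,1}^{SO}(0,n)=\frac{n!}{2^{n-1}}\cdot\frac{2^{n-1}}{n!}=1$.

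There is no genuine obstacle here; the whole argument is a degenerate specialisation followed by a classical identity. The only points that require care are the bookkeeping of the empty sums and products forced by $k_1=0$ and $k=1$, and the accounting of the exponents of $2$ and $-1$ in the prefactors; it is also worth recording that the hypothesis $n\ge1$ in the second statement is exactly what makes the even-index binomial sum equal $2^{n-1}$ (at $n=0$ it equals $1$). If one prefers, the determinant formulae of Theorems~\ref{Sp theorem 1} and~\ref{SO theorem 1} can be used instead: for $k=1$ the $1\times1$ determinant is a single hypergeometric function $g_m(u)$, differentiating its contour-integral definition in $u$ sends $g_m$ to $g_{m+2}$, and $g_m(0)=1/m!$ for $m\ge0$, so every derivative evaluated at $u=0$ becomes an explicit reciprocal factorial and the sum over the tuples $\mathbf{b}_i$ again collapses to the same elementary identities. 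The route through the combinatorial formulae is the shortest, though, so that is the one I would carry out in full.
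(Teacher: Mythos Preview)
Your proposal is correct and follows essentially the same route as the paper: specialise Theorems~\ref{Sp theorem 2} and~\ref{SO theorem 2} to $k_1=0$, $k_2=1$, $k=1$, obtain the single sums over $m$ with summands $\frac{1}{(n-2m)!(2m+1)!}$ and $\frac{1}{(n-2m)!(2m)!}$, and evaluate via the standard identities for the sums of odd- and even-indexed binomial coefficients. The paper's only cosmetic difference is that it reaches those identities by Pascal's rule (writing $\binom{n+1}{2l+1}=\binom{n}{2l}+\binom{n}{2l+1}$, etc.) rather than quoting them directly.
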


A mixed second moment involving a second derivative of these quadratic Dirichlet $L$-functions over function fields was considered by Djankovi{\'c} and \DJ oki{\'c} \cite{DD21}. In particular, Theorem 1.1 and Remark 1.2 in \cite{DD21}, along with Florea's asymptotic formula for the second moment of $L(1/2,\chi_D)$ in \cite{Florea}, imply that

\begin{equation*}
\frac{1}{|\mathcal{H}_{2g+1}|} \sum_{D \in \mathcal{H}_{2g+1}} \frac{L(1/2,\chi_D) L''(1/2,\chi_D)}{\log^2 q} \sim \frac{(1-q^{-1})}{80} \mathcal{B}(1/q) \cdot (2g+1)^5.
\end{equation*}
Similarly to the previous asymptotic formula, $\mathcal{B}(1/q)$ is given as an Euler product over monic, irreducible polynomials and it agrees with the arithmetic factor appearing in the asymptotic formula for the second moment of $L(1/2,\chi_D)$. This result is therefore consistent with the prediction of our conjecture as we compute that $b_{1,1}^{Sp}(0,2)=1/80$.

One can naturally use our Theorems \ref{Sp theorem 1}-\ref{O^- theorem} to make analogous conjectures for the joint moments of derivatives at the central point for any family of $L$-functions with symplectic or orthogonal symmetry.

\section{Preliminaries}

We begin with the following two lemmas concerning Vandermonde determinants of differential operators. The first is quoted from \cite[Lemma 2.8]{ABPRW14} and follows from the definition.

\begin{lemma} \label{Vandermonde 1}
Let $f_1(x),\dots, f_k(x)$ be $k-1$ times differentiable. Then

\begin{equation*}
\Delta \left( \frac{d}{dx} \right) \prod_{i=1}^k f_i(x_i)=\det_{k \times k} \left( f_i^{(j-1)}(x_i) \right).
\end{equation*}
\end{lemma}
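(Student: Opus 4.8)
The plan is to unfold both sides using the Leibniz expansion of a determinant and to check that the two expansions agree term by term, the only input being that the factor $f_i(x_i)$ on the left-hand side depends on the single variable $x_i$. First I would write
\begin{equation*}
\Delta\left(\frac{d}{dx}\right)=\det_{k\times k}\left(\frac{d^{j-1}}{dx_i^{j-1}}\right)=\sum_{\sigma\in S_k}\operatorname{sgn}(\sigma)\prod_{i=1}^k\frac{d^{\sigma(i)-1}}{dx_i^{\sigma(i)-1}},
\end{equation*}
noting that the product over $i$ is a well-defined operator because partial derivatives in distinct variables commute, so no ordering convention is needed.

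Second, I would apply a single term of this sum to $\prod_{l=1}^k f_l(x_l)$: since $d^{\sigma(i)-1}/dx_i^{\sigma(i)-1}$ acts only on the variable $x_i$ and each $f_l(x_l)$ is a function of $x_l$ alone, the operator hits exactly the factor $f_i$, giving $\prod_{i=1}^k f_i^{(\sigma(i)-1)}(x_i)$; here the hypothesis that each $f_i$ is $(k-1)$-times differentiable is exactly what is needed, since $\sigma(i)-1\le k-1$. Third, I would sum over $\sigma$ with the signs restored and recognise
\begin{equation*}
\sum_{\sigma\in S_k}\operatorname{sgn}(\sigma)\prod_{i=1}^k f_i^{(\sigma(i)-1)}(x_i)
\end{equation*}
as the Leibniz expansion of $\det_{k\times k}\bigl(f_i^{(j-1)}(x_i)\bigr)$, which is the asserted identity.

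There is no genuine obstacle here; the statement essentially follows from the definition, as noted in the paper. The only point to handle carefully is the bookkeeping in the second step — that in a product of differential operators in separate variables each operator lands on precisely one factor of $\prod_l f_l(x_l)$ — but this is immediate from the separation of variables and the commutativity of the operators $d/dx_i$. As an alternative one could argue by induction on $k$, peeling off the factor $\prod_{i=1}^{k-1}(d/dx_i-d/dx_k)$ and expanding along the last column, but the direct Leibniz-expansion argument above is the most transparent.
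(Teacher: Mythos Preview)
Your argument is correct and is precisely what the paper means when it says the lemma ``follows from the definition'': you expand the operator determinant by Leibniz, observe that each $\partial_{x_i}$-derivative hits only the factor $f_i(x_i)$, and recognise the resulting signed sum as the Leibniz expansion of $\det_{k\times k}\bigl(f_i^{(j-1)}(x_i)\bigr)$. There is nothing further to add.
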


\begin{lemma} \label{Vandermonde 2}
Let $f_1(x,y), \dots, f_k(x,y)$ be $k-1$ times differentiable in $x$ and $y$. Then

\begin{equation*}
\Delta \left( \frac{d}{dx} \right) \Delta \left( \frac{d}{dy} \right) \left. \prod_{i=1}^k f_i(x_i, y_i) \right|_{\substack{x_1=\cdots=x_k=X, \\ y_1=\cdots=y_k=Y}}=\sum_{\mu\in S_k} \det_{k \times k} \left( \frac{d^{i+j-2}}{dX^{i-1} dY^{j-1}} f_{\mu(i)}(X,Y) \right).
\end{equation*}
In particular, when $f_1=\cdots=f_k=f$, we have

\begin{equation*}
\Delta \left( \frac{d}{dx} \right) \Delta \left( \frac{d}{dy} \right) \left. \prod_{i=1}^k f(x_i, y_i) \right|_{\substack{x_1=\cdots=x_k=X, \\ y_1=\cdots=y_k=Y}}=k! \det_{k\times k} \left( \frac{d^{i+j-2}}{dX^{i-1} dY^{j-1}} f(X,Y) \right).
\end{equation*}
\end{lemma}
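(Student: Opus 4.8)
The plan is to derive the identity from Lemma \ref{Vandermonde 1} by applying it twice --- once in the variables $x_1,\dots,x_k$ and once in $y_1,\dots,y_k$ --- and then to reorganise the outcome into a sum over $S_k$; the case of equal functions will follow by a trivial counting argument.

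First I would use that $\Delta(d/dx)$ and $\Delta(d/dy)$ are built from differential operators in disjoint sets of variables and hence commute, so we may apply $\Delta(d/dy)$ first. Regarding each $x_i$ as a parameter, Lemma \ref{Vandermonde 1} gives $\Delta(d/dy)\prod_i f_i(x_i,y_i)=\det_{k\times k}\big(\tfrac{d^{j-1}}{dy_i^{j-1}}f_i(x_i,y_i)\big)$. Expanding this determinant by the Leibniz formula as $\sum_{\tau\in S_k}\operatorname{sgn}(\tau)\prod_{i=1}^k\tfrac{d^{\tau(i)-1}}{dy_i^{\tau(i)-1}}f_i(x_i,y_i)$ and then applying $\Delta(d/dx)$ to each summand --- invoking Lemma \ref{Vandermonde 1} once more, this time with the one-variable functions $x_i\mapsto\tfrac{d^{\tau(i)-1}}{dy_i^{\tau(i)-1}}f_i(x_i,y_i)$ --- yields
\[
\Delta\Big(\frac{d}{dx}\Big)\Delta\Big(\frac{d}{dy}\Big)\prod_{i=1}^k f_i(x_i,y_i)=\sum_{\tau\in S_k}\operatorname{sgn}(\tau)\det_{k\times k}\Big(\frac{d^{\,l+\tau(i)-2}}{dx_i^{l-1}\,dy_i^{\tau(i)-1}}f_i(x_i,y_i)\Big),
\]
where $l$ is the column index of the $k\times k$ determinant. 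Setting $x_1=\dots=x_k=X$ and $y_1=\dots=y_k=Y$ and expanding this last determinant over a further permutation $\rho\in S_k$ rewrites the right-hand side as $\sum_{\rho,\tau\in S_k}\operatorname{sgn}(\rho)\operatorname{sgn}(\tau)\prod_{i=1}^k\tfrac{d^{\rho(i)+\tau(i)-2}}{dX^{\rho(i)-1}dY^{\tau(i)-1}}f_i(X,Y)$.

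Next I would expand the claimed right-hand side in the same fashion: summing over $\mu\in S_k$ and over the internal permutation $\nu\in S_k$ of each determinant gives $\sum_{\mu,\nu\in S_k}\operatorname{sgn}(\nu)\prod_{i=1}^k\tfrac{d^{i+\nu(i)-2}}{dX^{i-1}dY^{\nu(i)-1}}f_{\mu(i)}(X,Y)$. Re-indexing the product by $i\mapsto\mu^{-1}(i)$ and putting $\rho=\mu^{-1}$, $\tau=\nu\circ\mu^{-1}$ --- a bijection of $S_k\times S_k$ onto itself under which $\operatorname{sgn}(\nu)=\operatorname{sgn}(\rho)\operatorname{sgn}(\tau)$ --- converts this into exactly the double sum obtained above, which establishes the general identity. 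When $f_1=\dots=f_k=f$ the determinant on the right no longer depends on $\mu$, so the sum over the $k!$ elements of $S_k$ simply multiplies it by $k!$; this gives the stated special case.

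I expect the only step calling for genuine care to be the permutation-and-sign bookkeeping in the comparison; the rest is a routine unwinding of Leibniz expansions. There is also a minor analytic point: iterating Lemma \ref{Vandermonde 1} presupposes that the mixed partials $\tfrac{\partial^{a+b}}{\partial x^a\partial y^b}f_i$ with $a,b\le k-1$ exist and are independent of the order of differentiation, which is ensured by the differentiability hypothesis (and is automatic for the entire functions $g_m(u)$ to which the lemma is ultimately applied).
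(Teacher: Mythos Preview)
Your proof is correct and follows essentially the same approach as the paper: apply Lemma \ref{Vandermonde 1} twice, expand via Leibniz, and re-index by a permutation to match the claimed form. The only cosmetic differences are that the paper applies $\Delta(d/dx)$ before $\Delta(d/dy)$ and performs the final re-indexing as a row interchange inside the determinant rather than by comparing fully expanded double sums over $S_k\times S_k$.
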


\begin{proof}
The case when $f_1=\cdots=f_k=f$ is the result of \cite[Lemma 2.9]{ABPRW14} and the proof of the general case follows the same lines. By Lemma \ref{Vandermonde 1}, we have

\begin{equation*}
\Delta \left( \frac{d}{dx} \right) \prod_{i=1}^k f_i(x_i, y_i)=\det_{k \times k} \left( \frac{d^{j-1}}{dx_i^{j-1}} f_i(x_i, y_i) \right)=\sum_{\mu\in S_k} \textrm{sign}(\mu) \prod_{i=1}^k \frac{d^{\mu(i)-1}}{dx_i^{\mu(i)-1}} f_i(x_i, y_i).
\end{equation*}
Then, by Lemma \ref{Vandermonde 1} again, we have that

\begin{align*}
\Delta \left( \frac{d}{dx} \right) \Delta \left( \frac{d}{dy} \right) \left. \prod_{i=1}^k f_i(x_i, y_i)  \right|_{\substack{x_i=X, \\ y_i=Y}} &=\sum_{\mu\in S_k} \textrm{sign}(\mu) \, \Delta \left( \frac{d}{dy} \right) \left. \prod_{i=1}^k \frac{d^{\mu(i)-1}}{dx_i^{\mu(i)-1}} f_i(x_i, y_i)  \right|_{\substack{x_i=X, \\ y_i=Y}} \\
&=\sum_{\mu\in S_k} \textrm{sign}(\mu) \left. \det_{k \times k} \left( \frac{d^{\mu(i)+j-2}}{dx_i^{\mu(i)-1} dy_i^{j-1}} f_i(x_i, y_i) \right)  \right|_{\substack{x_i=X, \\ y_i=Y}} \\
&=\sum_{\mu\in S_k} \textrm{sign}(\mu) \det_{k \times k} \left( \frac{d^{\mu(i)+j-2}}{dX^{\mu(i)-1} dY^{j-1}} f_i(X,Y) \right) \\
&=\sum_{\mu\in S_k} \det_{k \times k} \left( \frac{d^{i+j-2}}{dX^{i-1} dY^{j-1}} f_{\mu(i)}(X,Y) \right),
\end{align*}
where we have interchanged the rows of the matrix to obtain the final line.
\end{proof}

We next express a certain contour integral in terms of the hypergeometric functions $g_m(u)$.

\begin{lemma} \label{hypergeometric integral}
Let $k\in\mathbb{Z}$ and let $n\geq 1$ be an integer. Then, for complex numbers $u_1,\dots, u_n$, we have

\begin{equation*}
\frac{1}{2\pi\i} \oint_{|w|=1} \exp \left( w+\sum_{j=1}^n \frac{u_j}{w^{2j}} \right) \frac{dw}{w^{k+1}}=\sum_{m_2,\dots, m_n=0}^{\infty} \left( \prod_{j=2}^n \frac{u_j^{m_j}}{m_j!} \right) g_{k+2\sum_{j=2}^n jm_j}(u_1),
\end{equation*}
where $g_m(u)$ is the hypergeometric function defined in \textup{(\ref{hypergeometric def})}.
\end{lemma}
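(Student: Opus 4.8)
The plan is to prove Lemma \ref{hypergeometric integral} by separating the $w^{-2j}$ terms for $j\geq 2$ from the $j=1$ term and expanding the former as power series. First I would write
\begin{equation*}
\exp\left(w+\sum_{j=1}^n \frac{u_j}{w^{2j}}\right)=e^{w+u_1/w^2}\prod_{j=2}^n \exp\left(\frac{u_j}{w^{2j}}\right)=e^{w+u_1/w^2}\prod_{j=2}^n\sum_{m_j=0}^\infty \frac{u_j^{m_j}}{m_j!\,w^{2jm_j}},
\end{equation*}
so that multiplying out the finitely many (for $j=2,\dots,n$) geometric-type series gives
\begin{equation*}
\exp\left(w+\sum_{j=1}^n \frac{u_j}{w^{2j}}\right)=\sum_{m_2,\dots,m_n=0}^\infty \left(\prod_{j=2}^n \frac{u_j^{m_j}}{m_j!}\right)\frac{e^{w+u_1/w^2}}{w^{2\sum_{j=2}^n j m_j}}.
\end{equation*}

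Next I would substitute this into the contour integral, divide by $w^{k+1}$, and interchange the summation over $m_2,\dots,m_n$ with the integral $\frac{1}{2\pi\i}\oint_{|w|=1}$. This interchange is the step requiring a word of justification: on the circle $|w|=1$ the series in each $u_j/w^{2j}$ converges absolutely and uniformly (it is just $\exp$ of a bounded quantity), so by dominated convergence the sum and the integral commute. After the interchange, the $(m_2,\dots,m_n)$-th term contains the integral
\begin{equation*}
\frac{1}{2\pi\i}\oint_{|w|=1}\frac{e^{w+u_1/w^2}}{w^{k+1+2\sum_{j=2}^n j m_j}}\,dw,
\end{equation*}
which is exactly $g_{k+2\sum_{j=2}^n j m_j}(u_1)$ by the contour-integral definition \textup{(\ref{hypergeometric def})} of $g_m(u)$ (with $u=u_1$ and $m=k+2\sum_{j=2}^n j m_j$). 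Collecting terms yields the claimed identity.

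The only genuine subtlety, and hence the main obstacle, is ensuring the interchange of the infinite sum with the contour integral is legitimate and that the resulting series of $g_m(u_1)$ values converges; both follow from the fact that $g_m(u_1)$ is, up to the $1/\Gamma(m+1)$ factor, a ${}_0F_2$ hypergeometric value that decays super-exponentially in $m$, so the series in $(m_2,\dots,m_n)$ converges for all complex $u_j$ exactly as the original exponential does. Everything else is a direct manipulation of power series and the definition of $g_m$. One should also note that for the finitely many indices $j=2,\dots,n$ the products over geometric series are genuine (convergent) rearrangements since $\exp(u_j/w^{2j})$ is entire in $1/w^{2j}$, so no conditional-convergence issue arises.
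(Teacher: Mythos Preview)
Your proof is correct and follows essentially the same approach as the paper's: both expand the factors $\exp(u_j/w^{2j})$ for $j\geq 2$ as power series and identify the remaining contour integral as $g_{k+2\sum jm_j}(u_1)$ via the definition~(\ref{hypergeometric def}). The only cosmetic difference is that the paper peels off these factors one at a time through a recursion on $n$ for the Laurent coefficients $a_n(k)$, whereas you expand all $n-1$ factors simultaneously and justify the sum--integral interchange directly by uniform convergence on $|w|=1$.
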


\begin{proof}
We compute the integral by determining the coefficient of $w^k$ in the exponential factor of the integrand. So, let $a_n(k)$ be the coefficient of $w^k$ in $\exp(w+\sum_{j=1}^n \frac{u_j}{w^{2j}})$. Then,

\begin{align*}
\exp \left (w+\sum_{j=1}^n \frac{u_j}{w^{2j}} \right) & =\exp \left( \frac{u_n}{w^{2n}} \right) \exp \left (w+\sum_{j=1}^{n-1} \frac{u_j}{w^{2j}} \right) \\
& =\left( \sum_{m=0}^{\infty} \frac{u_n^m}{m!} w^{-2nm} \right) \left( \sum_{m=-\infty}^{\infty} a_{n-1}(m) w^m \right).
\end{align*}
From this it follows that

\begin{align*}
a_n(k) & =\sum_{m_n=0}^{\infty} \frac{u_n^{m_n}}{m_n!} a_{n-1}(k+2nm_n) \\
%& =\sum_{m_{n-1}, m_n=0}^{\infty} \frac{u_{n-1}^{m_{n-1}} u_n^{m_n}}{m_{n-1}! m_n!} a_{n-2}(k+2nm_n+2(n-1)m_{n-1}) \\
& = \sum_{m_2,\dots, m_n=0}^{\infty} \left( \prod_{j=2}^n \frac{u_j^{m_j}}{m_j!} \right) a_1(k+2\sum_{j=2}^n jm_j).
\end{align*}
We then see that by definition, $a_1(k+2\sum_{j=2}^n jm_j)=g_{k+2\sum_{j=2}^n jm_j}(u_1)$ and hence

\begin{equation*}
a_n(k)=\sum_{m_2,\dots, m_n=0}^{\infty} \left( \prod_{j=2}^n \frac{u_j^{m_j}}{m_j!} \right) g_{k+2\sum_{j=2}^n jm_j}(u_1),
\end{equation*}
as required.
\end{proof}

The next lemma allows us to take higher order derivatives of determinants of functions.

\begin{lemma}[Lemma 13 in \cite{KW23a}] \label{determinant derivative}
Let $s\geq 0$, $k\geq 1$ be integers and $a_{i,j}(x)$ be $s$-th differentiable functions of $x$. Then

\begin{equation*}
\left( \frac{d}{dx} \right)^s \det_{k\times k} \left( a_{i,j}(x) \right)=\sum_{l_1+\cdots+l_k=s} \binom{s}{l_1, \dots, l_k} \det_{k\times k} \left( a_{i,j}^{(l_i)}(x) \right),
\end{equation*}
where $a_{i,j}^{(l_i)}(x)$ means that we take the $l_i$-th derivative of $a_{i,j}(x)$.
\end{lemma}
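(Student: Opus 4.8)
The plan is to expand the determinant by the Leibniz permutation formula and then differentiate each resulting monomial term using the generalized (multinomial) Leibniz product rule. Writing
\begin{equation*}
\det_{k\times k} \left( a_{i,j}(x) \right) = \sum_{\sigma\in S_k} \textrm{sign}(\sigma) \prod_{i=1}^k a_{i,\sigma(i)}(x),
\end{equation*}
I would first apply the standard product rule for the $s$-th derivative of a product of $k$ functions, namely
\begin{equation*}
\left( \frac{d}{dx} \right)^s \prod_{i=1}^k f_i(x) = \sum_{l_1+\cdots+l_k=s} \binom{s}{l_1,\dots,l_k} \prod_{i=1}^k f_i^{(l_i)}(x),
\end{equation*}
to each term with $f_i=a_{i,\sigma(i)}$. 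Since every sum in sight is finite, differentiation passes termwise through the sum over $S_k$ with no convergence concerns.

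After this step one obtains
\begin{equation*}
\left( \frac{d}{dx} \right)^s \det_{k\times k} \left( a_{i,j}(x) \right) = \sum_{\sigma\in S_k} \textrm{sign}(\sigma) \sum_{l_1+\cdots+l_k=s} \binom{s}{l_1,\dots,l_k} \prod_{i=1}^k a_{i,\sigma(i)}^{(l_i)}(x).
\end{equation*}
The next step is to interchange the two finite sums, pulling the sum over the non-negative integer tuples with $l_1+\cdots+l_k=s$ to the outside. For each fixed such tuple $(l_1,\dots,l_k)$, the remaining inner sum $\sum_{\sigma\in S_k} \textrm{sign}(\sigma) \prod_{i=1}^k a_{i,\sigma(i)}^{(l_i)}(x)$ is precisely the Leibniz expansion of the determinant of the matrix whose $i$-th row is obtained by differentiating the $i$-th row of $(a_{i,j})$ exactly $l_i$ times; that is, it equals $\det_{k\times k}(a_{i,j}^{(l_i)}(x))$. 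Collecting the multinomial weights then yields the claimed identity.

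The one point requiring care — and what I would flag as the heart of the argument — is that the derivative order $l_i$ attached to each factor is indexed by the row $i$ rather than by the column $\sigma(i)$. In the product rule the $i$-th factor is $a_{i,\sigma(i)}$, so the exponent $l_i$ follows the first subscript uniformly as $\sigma$ ranges over $S_k$; this is exactly the consistency needed for the inner permutation sum to reassemble into a single determinant with the $i$-th row differentiated $l_i$ times. Had the derivative order instead followed the column index, the inner sum would not be a determinant. An entirely equivalent alternative is induction on $s$, using the familiar $s=1$ identity that the derivative of a determinant equals the sum of the $k$ determinants each having one row differentiated, together with the recurrence $\binom{s}{l_1,\dots,l_k}=\sum_{r=1}^k \binom{s-1}{l_1,\dots,l_r-1,\dots,l_k}$ for the multinomial coefficients; I would present the direct Leibniz computation above, as it is shorter and avoids the bookkeeping of the recurrence.
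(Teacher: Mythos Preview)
Your proof is correct. Note that the paper itself does not supply a proof of this lemma: it is simply quoted as Lemma~13 of \cite{KW23a}, so there is no in-paper argument to compare against. Your direct approach via the Leibniz permutation expansion of the determinant combined with the multinomial product rule is the standard and natural argument, and your remark that the derivative order $l_i$ tracks the row index (so that the inner sum over $S_k$ reassembles into a determinant with the $i$-th row differentiated $l_i$ times) is exactly the point that makes the computation work.
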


Now, the shifted moments of the characteristic polynomials are defined as
\begin{equation*}
I(G(2N); z_1, \dots, z_k):=\int_{G(2N)} \Lambda_A(z_1) \cdots \Lambda_A(z_k) \, dA.
\end{equation*}
These shifted moments have been computed by Conrey et al. \cite{CFKRS03} and can be expressed in the form of a multiple contour integral. We will use the following approximate versions of their formulae which follow easily from the results of \cite{CFKRS03} and the fact that $(1-e^{-x})^{-1}=x^{-1}+O(1)$ for small $x$.

\begin{lemma}[Corollary 2.4 in \cite{ABPRW14}] \label{Sp shifted moments}
Let $\alpha_1, \dots, \alpha_k$ be complex numbers such that $|\alpha_j|\ll 1/N$ for $j=1,2, \dots, k$. Then

\begin{align*}
& I(Sp(2N); e^{-\alpha_1}, \dots, e^{-\alpha_k}) \\
&\qquad =\frac{(-1)^{k(k-1)/2}}{(2\pi\i)^k k!} \oint \cdots \oint_{|w_i|=1} \frac{\Delta(w) \Delta(w^2) \, e^{N \sum_{i=1}^k (w_i-\alpha_i)}}{\prod_{1 \leq i,j \leq k} (w_i^2-\alpha_j^2)} \prod_{i=1}^k dw_i \left( 1+O(N^{-1}) \right).
\end{align*}
\end{lemma}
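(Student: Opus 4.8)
The plan is to obtain the stated formula by extracting the leading order behaviour of the \emph{exact} multiple contour integral representation of the symplectic shifted moment proved by Conrey et al. \cite{CFKRS03}, using only the elementary expansion $(1-e^{-x})^{-1}=x^{-1}+O(1)$ as $x\to 0$. In the notation adopted here, the formula of \cite{CFKRS03} has precisely the prefactor $(-1)^{k(k-1)/2}/((2\pi\i)^k k!)$, the exponential $e^{N\sum_i (w_i-\alpha_i)}$, and the contours $|w_i|=1$ that appear in the claim; it differs only in that each linear factor $x$ in the integrand is replaced by $1-e^{-x}$. Concretely, the exact numerator carries $\Delta(w^2)^2$ together with the self-pairing factors $1-e^{-(w_i+w_j)}$ over $1\le i<j\le k$, while the shifts enter through the denominator factors $(1-e^{-(w_i-\alpha_j)})(1-e^{-(w_i+\alpha_j)})$ over $1\le i,j\le k$. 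The first thing I would record is the algebraic identity
\[
\frac{\Delta(w^2)^2}{\prod_{1\le i<j\le k}(w_i+w_j)}=\Delta(w)\Delta(w^2),
\]
which shows that once each $1-e^{-x}$ is replaced by $x$ the numerator collapses to exactly $\Delta(w)\Delta(w^2)$ and the shift denominator to $\prod_{1\le i,j\le k}(w_i^2-\alpha_j^2)$, reproducing the claimed integrand.

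The key structural point is that, because $|\alpha_j|\ll 1/N$, the only non-removable singularities of either integrand inside the polydisc $\{|w_i|=1\}$ occur at $w_i=\pm\alpha_j$. The apparent poles at $w_i=-w_j$ coming from the factors $1-e^{-(w_i+w_j)}$ are cancelled by the double zeros of $\Delta(w^2)^2$, and the remaining zeros of $1-e^{-x}$ lie at distance at least $2\pi$ from the origin and are not enclosed. I would make the approximation rigorous by deforming each contour $|w_i|=1$ inward, without crossing any pole, to a small circle $|w_i|=c/N$ with $c$ fixed so that every point $\pm\alpha_j$ remains enclosed. On these contours $Nw_i=O(1)$, so $e^{Nw_i}$ stays bounded, and every argument $w_i^2-\alpha_j^2$, $w_i\pm\alpha_j$, and $w_i+w_j$ is uniformly $O(1/N)$; hence $(1-e^{-x})^{-1}=x^{-1}(1+O(1/N))$ holds pointwise and uniformly over the entire region of integration.

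The remainder is bookkeeping: on the small contours each factor $1-e^{-x}$ equals $x\,(1+O(1/N))$, and since the number of such factors depends only on $k$, the ratio of the exact integrand to the claimed integrand is $1+O(1/N)$ uniformly. Using the identity above to simplify and integrating then expresses the exact symplectic moment as the claimed contour integral times $1+O(1/N)$, which is the assertion.

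The step I expect to be the genuine obstacle is not the pointwise expansion but the transfer of a \emph{uniform} relative error on the integrand to a relative error on the integral, i.e. ruling out the possibility that the $1+O(1/N)$ multiplies a partially cancelled combination rather than the true main term. To close this I would verify that the claimed integrand, after the rescaling $w_i\mapsto w_i/N$ implicit in the small contour, has a clean nonzero $N\to\infty$ limit of order $N^{k(k+1)/2}$ for $O(1/N)$ shifts; this is exactly the type of computation performed in the subsequent sections and guarantees that the leading term does not vanish, so that the uniform $O(1/N)$ error on the integrand genuinely becomes an $O(1/N)$ relative error on the whole integral. Checking that the inward contour deformation crosses no poles, granted the cancellation noted above, completes the justification.
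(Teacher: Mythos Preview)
Your proposal is correct and follows exactly the route the paper indicates: the lemma is quoted from \cite{ABPRW14} without proof, and the paper merely remarks that it follows from the exact contour integral of \cite{CFKRS03} together with the expansion $(1-e^{-x})^{-1}=x^{-1}+O(1)$, which is precisely the argument you have fleshed out (contour shrinkage to radius $\asymp 1/N$, uniform replacement of each $1-e^{-x}$ by $x$, and the identity $\Delta(w^2)^2/\prod_{i<j}(w_i+w_j)=\Delta(w)\Delta(w^2)$). One small slip: you first place the factors $1-e^{-(w_i+w_j)}$ in the numerator but then say they produce ``apparent poles''; in the CFKRS integrand they sit in the denominator, which is what your pole-cancellation remark and your algebraic identity actually require.
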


\begin{lemma}[Corollary 2.5 in \cite{ABPRW14}] \label{SO shifted moments}
Let $\alpha_1, \dots, \alpha_k$ be complex numbers such that $|\alpha_j|\ll 1/N$ for $j=1,2, \dots, k$. Then

\begin{align*}
& I(SO(2N); e^{-\alpha_1}, \dots, e^{-\alpha_k}) \\
&\qquad =\frac{(-1)^{k(k-1)/2} 2^k}{(2\pi\i)^k k!} \oint \cdots \oint_{|w_i|=1} \frac{\Delta(w) \Delta(w^2) (\prod_{i=1}^k w_i) \, e^{N \sum_{i=1}^k (w_i+\alpha_i)}}{\prod_{1 \leq i,j \leq k} (w_i^2-\alpha_j^2)} \prod_{i=1}^k dw_i \left( 1+O(N^{-1}) \right).
\end{align*}
\end{lemma}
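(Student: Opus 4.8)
The plan is to obtain the stated approximate formula from the exact evaluation of the shifted moments over $SO(2N)$ due to Conrey et al.\ \cite{CFKRS03} (this is, in essence, how Corollary 2.5 of \cite{ABPRW14} is derived). That work expresses $I(SO(2N);e^{-\alpha_1},\dots,e^{-\alpha_k})$ exactly as a $k$-fold contour integral whose integrand, up to the explicit constant $(-1)^{k(k-1)/2}2^k/k!$ and the exponential factor $e^{N\sum_i(w_i+\alpha_i)}$ (the $e^{N\sum_i\alpha_i}$ part of which is a bounded quantity because $|\alpha_i|\ll 1/N$), is a rational function of $e^{w_1},\dots,e^{w_k}$ and $e^{\alpha_1},\dots,e^{\alpha_k}$ built out of elementary factors of the form $(1-e^{-x})^{-1}$, with $x$ ranging over the sums and differences $w_i\pm\alpha_j$ and $w_i\pm w_j$; here the contours may be taken to be circles about the origin enclosing all of the points $\pm\alpha_j$.

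First I would take these contours to have radius of order $1/N$, which is permissible because, once the numerator is cancelled against the apparent poles along $w_i=\pm w_j$, the only singularities of the exact integrand inside the unit polydisc are the simple poles at $w_i=\pm\alpha_j$, and these lie well inside a circle of radius $\asymp 1/N$. On such contours every argument $x$ above is $O(1/N)$, so the elementary estimate $(1-e^{-x})^{-1}=x^{-1}+O(1)=x^{-1}(1+O(x))$ applies uniformly to each of the $O(k^2)$ factors at once; as $k$ is fixed, this replaces the rational function by a ratio of polynomials at the cost of an overall multiplicative error $1+O(N^{-1})$ that is uniform on the contours, and a short bookkeeping computation identifies that ratio with $\frac{\Delta(w)\Delta(w^2)\prod_{i=1}^k w_i}{\prod_{1\le i,j\le k}(w_i^2-\alpha_j^2)}$. (The spurious poles along $w_i=\pm w_j$ created by this replacement are dealt with in the usual way, by keeping the contours slightly nested and exploiting the symmetry of the integrand.) Since the resulting approximate integrand is holomorphic in each $w_i$ away from the simple poles at $w_i=\pm\alpha_j$, I would finally deform each contour out to $|w_i|=1$ without changing the value, obtaining precisely the integral in the statement. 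Lemma~\ref{Sp shifted moments} is proved in the same way from the $Sp(2N)$ evaluation of \cite{CFKRS03}, the only changes being the absence of the factor $\prod_i w_i$ and the sign in the leftover exponential.

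The step that requires genuine care is upgrading the uniform multiplicative error on the integrand to the factor $1+O(N^{-1})$ on the value of the integral. Writing the difference of the exact and approximate integrals as $\oint\cdots\oint(\text{approximate integrand})\,G(w,\alpha)\,dw$ with $|G|\ll 1/N$ on the circles of radius $\asymp 1/N$, one needs that the integral of $|\text{approximate integrand}|$ over those circles is of the same order of magnitude as $\bigl|\oint\cdots\oint(\text{approximate integrand})\,dw\bigr|$, i.e.\ that there is no large cancellation. A direct size estimate on the small contours bounds the former by $O(N^{k(k-1)/2})$, while a standard leading-order (Keating--Snaith-type, cf.\ \cite{KS00}) analysis of the main term shows it is $\asymp N^{k(k-1)/2}$; these orders coincide, so the error is $O(N^{-1})$ relative to the main term. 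I expect this comparison to be the main, though essentially routine, obstacle.
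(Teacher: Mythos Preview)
Your proposal is correct and follows exactly the route the paper indicates: the lemma is quoted from \cite{ABPRW14} with the remark that it ``follows easily from the results of \cite{CFKRS03} and the fact that $(1-e^{-x})^{-1}=x^{-1}+O(1)$ for small $x$,'' and no further proof is given. Your write-up simply supplies the details behind that one-line justification, including the careful treatment of the relative error, which the paper omits.
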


\begin{lemma}[Corollary 2.6 in \cite{ABPRW14}] \label{O^- shifted moments}
Let $\alpha_1, \dots, \alpha_k$ be complex numbers such that $|\alpha_j|\ll 1/N$ for $j=1,2, \dots, k$. Then

\begin{align*}
& I(O^-(2N); e^{-\alpha_1}, \dots, e^{-\alpha_k}) \\
&\qquad =\frac{(-1)^{k(k-1)/2} 2^k}{(2\pi\i)^k k!} \oint \cdots \oint_{|w_i|=1} \frac{\Delta(w) \Delta(w^2) (\prod_{i=1}^k \alpha_i) \, e^{N \sum_{i=1}^k (w_i+\alpha_i)}}{\prod_{1 \leq i,j \leq k} (w_i^2-\alpha_j^2)} \prod_{i=1}^k dw_i \left( 1+O(N^{-1}) \right).
\end{align*}
\end{lemma}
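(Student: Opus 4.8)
This is the approximate shifted-moment formula for $O^-(2N)$, the analogue of Lemmas \ref{Sp shifted moments} and \ref{SO shifted moments} for $Sp(2N)$ and $SO(2N)$, and the plan is the same for all three: one starts from the exact autocorrelation formula of Conrey et al. \cite{CFKRS03} for the characteristic polynomials of $O^-(2N)$ and simplifies it under the hypothesis $|\alpha_j|\ll 1/N$.

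First I would write out the exact formula of \cite{CFKRS03} for $I(O^-(2N);e^{-\alpha_1},\dots,e^{-\alpha_k})$. It takes the form of a $k$-fold contour integral over circles $|w_i|=1$ (equivalently, a sum over the $2^k$ sign choices for the $\alpha_j$), whose integrand is a product of three kinds of factors: an exponential $e^{N(\cdots)}$ linear in the $w_i$ and the $\alpha_j$; polynomial (Vandermonde-type) factors in $w_1,\dots,w_k$; and ``zeta factors'' $(1-e^{-x})^{\pm1}$ in which $x$ runs over the relevant linear forms $w_i\pm w_j$ ($i\neq j$), $w_i\pm\alpha_j$ and $2w_i$, together with an extra factor $\prod_{i=1}^k(1-e^{-2\alpha_i})$. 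This last factor is precisely what distinguishes the $O^-(2N)$ formula from the $Sp(2N)$ one: it encodes the deterministic zero of $\Lambda_A$ at $s=1$ and produces the $\prod_i\alpha_i$ in the statement.

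Next, since $|\alpha_j|\ll 1/N$, on the part of the contours that carries the main contribution every argument $x$ above is $O(1/N)$, so I would replace each zeta factor $(1-e^{-x})^{\pm1}$ by $x^{\pm1}$; because $(1-e^{-x})^{-1}=x^{-1}(1+O(x))$, each replacement costs a relative error $1+O(1/N)$, and as there are only $O(k^2)$ such factors the cumulative error is again $1+O(1/N)$ for fixed $k$. This turns each pair $(1-e^{-(w_i-w_j)})(1-e^{-(w_i+w_j)})$ into $w_i^2-w_j^2$ (assembling the factor $\Delta(w^2)$), the denominators $\prod_{i,j}(1-e^{-(w_i-\alpha_j)})(1-e^{-(w_i+\alpha_j)})$ into $\prod_{i,j}(w_i^2-\alpha_j^2)$, the factor $\prod_i(1-e^{-2\alpha_i})$ into $2^k\prod_i\alpha_i$, and the remaining $w$-polynomial into $\Delta(w)$; together with the unchanged exponential and the combinatorial prefactor this is exactly the formula in the statement, up to $1+O(N^{-1})$.

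The two genuinely non-routine points are: (i) writing the exact formula of \cite{CFKRS03} down precisely, with the correct prefactor, exponential and the exact list of zeta factors; and (ii) verifying that the replacement error is uniformly $1+O(N^{-1})$, which requires bounding the integrand and its perturbation uniformly on the contours, hence choosing the contours to stay a bounded distance from the zeros of the zeta-factor denominators (the poles at $w_i=\pm\alpha_j$, at distance $O(1/N)$ from $|w_i|=1$, being accounted for by the explicit rational pole structure), after which integration over compact contours preserves the order of the error. A useful consistency check, and an alternative derivation, is the exact identity $I(O^-(2N);e^{-\alpha_1},\dots,e^{-\alpha_k})=\prod_{j=1}^{k}(1-e^{-2\alpha_j})\,I(Sp(2N-2);e^{-\alpha_1},\dots,e^{-\alpha_k})$, valid because for $A\in O^-(2N)$ one has $\Lambda_A(s)=(1-s^2)\Lambda_B(s)$ with $B$ Haar-distributed on $Sp(2N-2)$ (the $N-1$ nontrivial eigenangles of $A$ obeying the Weyl law of $Sp(2N-2)$); combined with $\prod_j(1-e^{-2\alpha_j})=2^k\prod_j\alpha_j\,(1+O(N^{-1}))$ and elementary contour manipulations, this reduces the lemma to Lemma \ref{Sp shifted moments} with $N$ replaced by $N-1$.
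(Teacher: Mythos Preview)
Your main strategy is exactly the one the paper indicates (and that \cite{ABPRW14} carries out): take the exact $O^-(2N)$ autocorrelation formula of \cite{CFKRS03} and replace every factor $(1-e^{-x})^{\pm1}$ by $x^{\pm1}$, each substitution costing a relative $1+O(1/N)$; the paper gives no proof beyond this one-line justification, and you have correctly fleshed it out, including identifying the factor $\prod_i(1-e^{-2\alpha_i})$ as the source of the $2^k\prod_i\alpha_i$ that distinguishes $O^-$ from $Sp$. One expository wrinkle: on the unit circles $|w_i|=1$ the arguments $w_i\pm w_j$ are \emph{not} $O(1/N)$, so the approximation is actually made on the small contours (radius $\asymp 1/N$, enclosing the $\pm\alpha_j$) that appear in the exact \cite{CFKRS03} formula; only after the replacement does one deform to $|w_i|=1$, since the approximate integrand is meromorphic with poles only at $w_i=\pm\alpha_j$.

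Your alternative route via the exact identity $I(O^-(2N);e^{-\alpha_1},\dots,e^{-\alpha_k})=\prod_j(1-e^{-2\alpha_j})\,I(Sp(2N-2);e^{-\alpha_1},\dots,e^{-\alpha_k})$ is not used in the paper and is a genuinely cleaner approach: it reduces the $O^-$ lemma directly to Lemma~\ref{Sp shifted moments}, bypassing the need to write out the exact $O^-$ formula from \cite{CFKRS03} at all, and it makes the structural relationship between Theorems~\ref{O^- theorem} and~\ref{Sp theorem 1}--\ref{Sp theorem 2} transparent at the level of the shifted moments rather than only after differentiation. Just be careful when matching the resulting expression with the stated formula: the $Sp(2N-2)$ integral carries $e^{(N-1)\sum_i(w_i-\alpha_i)}$ while the lemma has $e^{N\sum_i(w_i+\alpha_i)}$, so reconciling the two requires a further contour manipulation (or the functional equation for $\Lambda_A$) rather than merely invoking $N-1=N(1+O(1/N))$.
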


Below we give two expressions for the derivatives of these contour integral expressions for shifted moments with respect to the shifts $\alpha_j$.

\begin{lemma} \label{1st derivative}
Let $n \geq 0$ and $k \geq 1$ be integers. Then

\begin{equation*}
\frac{d^n}{d\alpha^n} \left. \frac{e^{-N\alpha}}{\prod_{i=1}^k (w_i^2-\alpha^2)}\right|_{\alpha=0}=\left( \prod_{i=1}^k \frac{1}{w_i^2} \right) \sum_{m=0}^n \binom{n}{m} (-N)^{n-m} \, m! \sum_{\substack{l_1+\cdots+l_k=m \\ l_j \ \textup{even}}} \prod_{i=1}^k \frac{1}{w_i^{l_i}}.
\end{equation*}
\end{lemma}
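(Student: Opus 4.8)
The plan is to factor the integrand as $e^{-N\alpha}\cdot f(\alpha)$, where $f(\alpha):=\prod_{i=1}^k(w_i^2-\alpha^2)^{-1}$, apply the Leibniz rule for the $n$-th derivative of a product, and then read off the Taylor coefficients of $f$ at $\alpha=0$ via a geometric series expansion of each factor.

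First I would write
\[
\frac{d^n}{d\alpha^n}\Bigl(e^{-N\alpha}f(\alpha)\Bigr)=\sum_{m=0}^n\binom{n}{m}\left(\frac{d^{\,n-m}}{d\alpha^{\,n-m}}e^{-N\alpha}\right)f^{(m)}(\alpha),
\]
and observe that $\frac{d^{\,n-m}}{d\alpha^{\,n-m}}e^{-N\alpha}=(-N)^{n-m}e^{-N\alpha}$, which equals $(-N)^{n-m}$ at $\alpha=0$. This already accounts for the factor $\sum_{m=0}^n\binom{n}{m}(-N)^{n-m}$ in the asserted identity, so what remains is to show
\[
f^{(m)}(0)=m!\left(\prod_{i=1}^k\frac{1}{w_i^2}\right)\sum_{\substack{l_1+\cdots+l_k=m\\ l_j\ \textup{even}}}\ \prod_{i=1}^k\frac{1}{w_i^{l_i}}.
\]

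For this I would expand each factor of $f$ as a geometric series: since $|w_i|=1$ on the contour and $|\alpha|\ll 1/N$, we have $|\alpha/w_i|<1$, so
\[
\frac{1}{w_i^2-\alpha^2}=\frac{1}{w_i^2}\sum_{a_i\ge 0}\frac{\alpha^{2a_i}}{w_i^{2a_i}},
\]
and multiplying over $i$ gives $f(\alpha)=\bigl(\prod_i w_i^{-2}\bigr)\sum_{a_1,\dots,a_k\ge 0}\alpha^{2(a_1+\cdots+a_k)}\prod_i w_i^{-2a_i}$. The coefficient of $\alpha^m$ is nonzero only when $m$ is even; writing $l_i:=2a_i$ turns the exponent condition $2(a_1+\cdots+a_k)=m$ into $l_1+\cdots+l_k=m$ with all $l_j$ even, and the coefficient is precisely $\bigl(\prod_i w_i^{-2}\bigr)\sum_{\substack{l_1+\cdots+l_k=m,\ l_j\ \mathrm{even}}}\prod_i w_i^{-l_i}$. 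Since $\frac{d^m}{d\alpha^m}\alpha^j\big|_{\alpha=0}$ equals $m!$ when $j=m$ and $0$ otherwise, differentiating the series $m$ times and setting $\alpha=0$ extracts exactly $m!$ times this coefficient, giving the displayed formula for $f^{(m)}(0)$. Substituting back into the Leibniz expansion yields the lemma.

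There is no genuine obstacle here; the only point meriting a comment is the legitimacy of differentiating the geometric series term by term, which is justified either by its uniform convergence on a small disc $|\alpha|<1$ (using $|w_i|=1$) or, more cleanly, by simply interpreting the computation in the ring of formal power series in $\alpha$, since only the Taylor coefficients at $\alpha=0$ enter the statement.
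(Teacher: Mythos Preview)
Your proof is correct and complete. The paper does not give its own argument here but simply cites the proof of \cite[Lemma~2.7]{ABPRW14}; your Leibniz-rule-plus-geometric-series computation is exactly the natural way to establish the identity and is almost certainly what that reference does as well.
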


\begin{proof}
This follows from the proof of \cite[Lemma 2.7]{ABPRW14} where we have corrected a typo.
\end{proof}

\begin{lemma} \label{2nd derivative}
Let $n \geq 0$ and $k \geq 1$ be integers. Then

\begin{equation*}
\frac{d^n}{d\alpha^n} \left. \frac{e^{-N\alpha}}{\prod_{i=1}^k (w_i^2-\alpha^2)}\right|_{\alpha=0}= \left(\prod_{i=1}^k \frac{1}{w_i^2}\right) \sum_{\substack{m_1+2m_2+\cdots+n m_n=n \\ m_3=m_5=\cdots=0}} \frac{n!}{m_1!\cdots m_n!} (-N)^{m_1} \prod_{j=1}^{[n/2]} \left(\frac{1}{j} \sum_{i=1}^k \frac{1}{w_i^{2j}}\right)^{m_{2j}}.
\end{equation*}

%\begin{equation*}
%\frac{d^n}{d\alpha^n} \left. \frac{e^{-N\alpha}}{\prod_{i=1}^k (w_i^2-\alpha^2)}\right|_{\alpha=0}= %\left(\prod_{i=1}^k \frac{1}{w_i^2}\right) \sum_{{\bf m}_i \in P(n)} \frac{n!}{{\bf m}_i!} (-N)^{m_{i, %0}} \prod_{j=1}^{[n/2]} \left( \frac{1}{j} \sum_{i=1}^k \frac{1}{w_i^{2j}} \right)^{m_{i, j}},
%\end{equation*}
%where

%\begin{equation*}
%P(n):=\{ {\bf m}_i=(m_{i, 0}, m_{i, 1}, \dots, m_{i, [n/2]}): m_{i, j} \geq 0, \ m_{i, 0}+2 %\sum_{j=1}^{[n/2]} j m_{i, j}=n \},
%\end{equation*}
%and ${\bf m}_i!:=\prod_{j=0}^{[n/2]} m_{i, j}!$.
\end{lemma}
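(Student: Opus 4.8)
The plan is to compute the Taylor coefficients of $f(\alpha):=e^{-N\alpha}/\prod_{i=1}^k(w_i^2-\alpha^2)$ at $\alpha=0$ directly, by first passing to the logarithm. Since $|w_i|=1$, for $|\alpha|$ sufficiently small (or simply as an identity of formal power series in $\alpha$) we have $\log(w_i^2-\alpha^2)=\log w_i^2-\sum_{j\geq1}\frac1j\frac{\alpha^{2j}}{w_i^{2j}}$, so that
\[
\log f(\alpha)=-\sum_{i=1}^k\log w_i^2-N\alpha+\sum_{j\geq1}\frac{\alpha^{2j}}{j}\,p_j,\qquad p_j:=\sum_{i=1}^k\frac{1}{w_i^{2j}}.
\]
Exponentiating yields the factorisation
\[
f(\alpha)=\Bigl(\prod_{i=1}^k\frac{1}{w_i^2}\Bigr)e^{-N\alpha}\prod_{j\geq1}\exp\!\Bigl(\frac{\alpha^{2j}}{j}p_j\Bigr),
\]
which already isolates the prefactor $\prod_i w_i^{-2}$ appearing in the statement and explains why only the power sums $p_j=\sum_i w_i^{-2j}$ enter.

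Next I would expand each exponential factor as a power series, $e^{-N\alpha}=\sum_{m_1\geq0}\frac{(-N)^{m_1}}{m_1!}\alpha^{m_1}$ and $\exp(\frac{\alpha^{2j}}{j}p_j)=\sum_{m_{2j}\geq0}\frac{1}{m_{2j}!}(p_j/j)^{m_{2j}}\alpha^{2jm_{2j}}$, and take the Cauchy product. Collecting the coefficient of $\alpha^n$ in $f(\alpha)\prod_iw_i^2$ gives the sum over all tuples of non-negative integers $(m_1,m_2,m_4,m_6,\dots)$ with $m_1+2m_2+4m_4+\cdots=n$ of $\frac{(-N)^{m_1}}{m_1!}\prod_{j\geq1}\frac{1}{m_{2j}!}(p_j/j)^{m_{2j}}$. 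Finally, using $\frac{d^n}{d\alpha^n}f|_{\alpha=0}=n!\,[\alpha^n]f(\alpha)$ produces the overall factor $n!$, and multiplying and dividing by the (empty) product of $m_{2l+1}!$ for $l\geq1$ lets us rewrite $n!/(m_1!\,m_2!\,m_4!\cdots)$ as $n!/(m_1!\cdots m_n!)$ under the constraint $m_1+2m_2+\cdots+nm_n=n$ with $m_3=m_5=\cdots=0$; since $2jm_{2j}\leq n$ forces $j\leq[n/2]$, the product over $j$ may be truncated at $[n/2]$, giving exactly the claimed formula.

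The computation is routine; the only points that merit a line of justification are (i) the validity of the logarithmic expansion, which is immediate because $|\alpha^2/w_i^2|<1$ on $|w_i|=1$ for small $\alpha$ and in any event can be read as an equality of formal power series, and (ii) the bookkeeping showing that the constraint ``$m_1+2m_2+\cdots+nm_n=n$, $m_3=m_5=\cdots=0$'' is the same as partitioning $n$ into copies of the part $1$ together with even parts. An alternative would be to start from Lemma \ref{1st derivative} and match the two expressions by a change of summation variable grouping the even exponents, but the generating-function argument above is cleaner and makes the structure of the answer transparent, so I expect no genuine obstacle beyond this notational reconciliation.
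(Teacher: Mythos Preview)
Your argument is correct. Both your proof and the paper's rest on the same idea---writing $f(\alpha)=e^{g(\alpha)}$ with $g(\alpha)=-N\alpha-\sum_i\log(w_i^2-\alpha^2)$ and extracting the $n$-th Taylor coefficient from the exponential---but the presentations differ. The paper sets up the recursion $f_{n+1}=f_nf_1+f_n'$ for the polynomials $f_n$ defined by $f^{(n)}=f\cdot f_n$, introduces an antiderivative $g$ of $f_1$, and then invokes Fa\`a di Bruno's formula for $(e^g)^{(n)}$, after which it evaluates the derivatives $f_1^{(j)}(0)$ one by one. You instead write out the Taylor series of $g$ explicitly via the logarithmic series, exponentiate factor by factor, and read off $[\alpha^n]$ by a Cauchy product. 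Your route is shorter and makes the combinatorial structure (a partition of $n$ into one kind of odd part, namely $1$'s, and arbitrary even parts) visible immediately; the paper's route has the minor advantage that it does not require choosing a branch of $\log w_i^2$ or appealing to formal power series, since everything is phrased in terms of derivatives of $f_1$ at $0$. Either way the verification is routine, and your bookkeeping for rewriting the index set with $m_3=m_5=\cdots=0$ is fine.
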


\begin{proof}
The proof is similar to that of \cite[Lemma 9]{KW23a}. First, we have

\begin{equation*}
\frac{d}{d\alpha} \frac{e^{-N\alpha}}{\prod_{i=1}^k (w_i^2-\alpha^2)}=\frac{e^{-N\alpha}}{\prod_{i=1}^k (w_i^2-\alpha^2)} f_1 (\alpha),
\end{equation*}
where

\begin{equation*}
f_1 (\alpha)=-N+2\alpha \sum_{i=1}^k \frac{1}{w_i^2-\alpha^2}.
\end{equation*}
We can then write

\begin{equation} \label{recursive diff}
\frac{d^n}{d\alpha^n} \frac{e^{-N\alpha}}{\prod_{i=1}^k (w_i^2-\alpha^2)}=\frac{e^{-N\alpha}}{\prod_{i=1}^k (w_i^2-\alpha^2)} f_n(\alpha),
\end{equation}
where $f_n (\alpha)$ is defined recursively by

\begin{equation*}
f_{n+1}(\alpha)=f_n(\alpha) f_1(\alpha)+f_n'(\alpha).
\end{equation*}
Now, let $g(\alpha)$ be a function such that $g'(\alpha)=f_1(\alpha)$. Then, we have that

\begin{equation*}
\frac{d^n}{d\alpha^n} e^{g(\alpha)}=e^{g(\alpha)} f_n(\alpha).
\end{equation*}
But, by Fa{\`a} di Bruno's formula, we also have that

\begin{align*}
\frac{d^n}{d\alpha^n} e^{g(\alpha)} & =e^{g(\alpha)} \sum_{m_1+2m_2+\cdots+n m_n=n} \frac{n!}{m_1! \cdots m_n!} \prod_{j=1}^n \left( \frac{g^{(j)}(\alpha)}{j!} \right)^{m_j} \\
& =e^{g(\alpha)} \sum_{m_1+2m_2+\cdots+n m_n=n} \frac{n!}{m_1! \cdots m_n!} \prod_{j=1}^n \left( \frac{f_1^{(j-1)}(\alpha)}{j!} \right)^{m_j}.
\end{align*}
Comparing the above two expressions for $(d/d\alpha)^n e^{g(\alpha)}$, we see that

\begin{equation*}
f_n(\alpha)=\sum_{m_1+2m_2+\cdots+n m_n=n} \frac{n!}{m_1!\cdots m_n!} \prod_{j=1}^n \left( \frac{f_1^{(j-1)}(\alpha)}{j!} \right)^{m_j}.
\end{equation*}
One can check that for $j\geq 1$, we have

\begin{align*}
f_1^{(j)}(0)=\begin{cases} 0 &\textrm{if} \quad j \quad \textrm{even}, \\ 2j! \sum_{i=1}^k w_i^{-(1+j)} &\textrm{if} \quad j \quad \text{odd}. \end{cases}
\end{align*}
Hence, we have that

\begin{equation*}
f_n(0)=\sum_{\substack{m_1+2m_2+\cdots+nm_n=n \\ m_3=m_5=\cdots=0}} \frac{n!}{m_1! \cdots m_n!} (-N)^{m_1} \prod_{j=1}^{[n/2]} \left( \frac{1}{j} \sum_{i=1}^k \frac{1}{w_i^{2j}} \right)^{m_{2j}}.
\end{equation*}
Evaluating (\ref{recursive diff}) at $\alpha=0$ using this expression for $f_n(0)$ yields the desired result.
%Rewriting this with the notation in the statement of the lemma, we have

%\begin{equation*}
%f_n(0)=\sum_{{\bf m}_i \in P(n)} \frac{n!}{{\bf m}_i!} (-N)^{m_{i, 0}} \prod_{j=1}^{[n/2]} \left( \frac{1}%{j} \sum_{i=1}^k \frac{1}{w_i^{2j}} \right)^{m_{i, j}},
%\end{equation*}
\end{proof}

In the next two propositions we will compute the main contour integrals that we need to evaluate.

\begin{proposition} \label{1st integral prop}
Let $k\geq 1$ and $n\geq 1$ be integers. Also, let $(m_1,\dots,m_n)$ be a tuple of non-negative integers. Then, we have

\begin{align} \label{integral 1}
& \frac{1}{(2\pi\i)^k} \oint \cdots \oint_{|w_i|=1} \Delta(w) \Delta(w^2) \, e^{N \sum_{i=1}^k w_i} \prod_{j=1}^n \left( \sum_{i=1}^k \frac{1}{w_i^{2j}} \right)^{m_j} \prod_{i=1}^k \frac{dw_i}{w_i^{2k}} \nonumber \\
&=(-1)^{k(k-1)/2} k! N^{k(k+1)/2+2 \sum_{j=1}^n j m_j} \sum_{\substack{\sum_{i=1}^k m_{s, i}=m_s \\ s=2, \dots, n}} \left( \prod_{s=2}^n \binom{m_s}{m_{s, 1}, \dots, m_{s, k}} \right) \nonumber \\
&\qquad \times \left( \frac{d}{du} \right)^{m_1} \left. \det_{k\times k} \left( g_{2i-j+2 \sum_{s=2}^n s m_{s, i}}(u) \right) \right|_{u=0}.
\end{align}
Also, we have

\begin{align} \label{integral 2}
& \frac{1}{(2\pi\i)^k} \oint \cdots \oint_{|w_i|=1} \Delta(w) \Delta(w^2) \, e^{N \sum_{i=1}^k w_i} \prod_{j=1}^n \left( \sum_{i=1}^k \frac{1}{w_i^{2j}} \right)^{m_j} \prod_{i=1}^k \frac{dw_i}{w_i^{2k-1}} \nonumber \\
&  =(-1)^{k(k-1)/2} k! N^{k(k-1)/2+2 \sum_{j=1}^n j m_j} \sum_{\substack{\sum_{i=1}^k m_{s, i}=m_s \\ s=2, \dots, n}} \left( \prod_{s=2}^n \binom{m_s}{m_{s, 1}, \dots, m_{s, k}} \right) \nonumber \\
&\qquad \times \left( \frac{d}{du} \right)^{m_1} \left. \det_{k\times k} \left( g_{2i-j-1+2 \sum_{s=2}^n s m_{s, i}}(u) \right) \right|_{u=0}.
\end{align}
\end{proposition}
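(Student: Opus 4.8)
The plan is to reduce both $k$-fold contour integrals to a single $k\times k$ determinant by the linearization device used in \cite{ABPRW14, CRS06, KW23a}: auxiliary parameters turn the symmetric power sums and the two Vandermonde factors into a product over the integration variables, after which the Vandermonde differential operators of Lemmas \ref{Vandermonde 1} and \ref{Vandermonde 2} collapse the integral to a determinant of one-dimensional contour integrals, which are then identified with the functions $g_m$ through Lemma \ref{hypergeometric integral}.

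In detail, for each $j=1,\dots,n$ I would write $\left(\sum_{i=1}^k w_i^{-2j}\right)^{m_j}=\left(\frac{d}{du_j}\right)^{m_j}\prod_{i=1}^k e^{u_j/w_i^{2j}}\big|_{u_j=0}$ and pull all these derivatives outside the integral; and I would write $\Delta(w)=\Delta\!\left(\frac{d}{dx}\right)\prod_i e^{x_iw_i}\big|_{x=0}$ and $\Delta(w^2)=\Delta\!\left(\frac{d}{dy}\right)\prod_i e^{y_iw_i^2}\big|_{y=0}$ by Lemma \ref{Vandermonde 1}. The remaining integrand is then a product over $i$ of one and the same function of $w_i$, so the $k$-fold integral factorizes as $\prod_{i=1}^k\Phi(x_i,y_i,u)$ with $\Phi(x,y,u)=\frac{1}{2\pi\i}\oint_{|w|=1}e^{(N+x)w+yw^2+\sum_{j}u_jw^{-2j}}\,\frac{dw}{w^{2k}}$ for (\ref{integral 1}) and with $w^{2k}$ replaced by $w^{2k-1}$ for (\ref{integral 2}). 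Applying Lemma \ref{Vandermonde 2} with $f_1=\cdots=f_k=\Phi(\cdot,\cdot,u)$ converts $\Delta(d/dx)\Delta(d/dy)\prod_i\Phi(x_i,y_i,u)\big|_{x=y=0}$ into $k!\det_{k\times k}\!\big(\partial_x^{i-1}\partial_y^{j-1}\Phi(0,0,u)\big)$, and each entry equals the single contour integral $\frac{1}{2\pi\i}\oint w^{(i-1)+2(j-1)-2k}e^{Nw+\sum_j u_jw^{-2j}}\,dw$ (one lower power of $w$ in the denominator for (\ref{integral 2})).

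The next step is the substitution $w\mapsto w/N$ in every entry. This extracts the advertised powers of $N$ — after taking the determinant one gets $N^{k(k+1)/2}$ for (\ref{integral 1}) and $N^{k(k-1)/2}$ for (\ref{integral 2}), and the rescaling $u_j\mapsto u_jN^{2j}$ forced by the substitution produces the further factor $N^{2\sum_j jm_j}$ once the $u_j$-derivatives are applied — and puts each entry in the form covered by Lemma \ref{hypergeometric integral}. The determinant naturally comes out with $g$-index $2k+2-i-2j$; transposing it and reversing the index coming from the $\Delta(w^2)$ factor brings it to the required pattern $2i-j$ and supplies the sign $(-1)^{k(k-1)/2}$, and Lemma \ref{hypergeometric integral} then writes the $(i,j)$ entry as $\sum_{m_2',\dots,m_n'\geq 0}\big(\prod_{s=2}^n u_s^{m_s'}/m_s'!\big)\,g_{2i-j+2\sum_{s=2}^n sm_s'}(u_1)$. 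Finally I would carry out the derivatives $(d/du_s)^{m_s}$ for $s=2,\dots,n$: by Lemma \ref{determinant derivative} each distributes over the rows of the determinant, producing the multinomial coefficients $\binom{m_s}{m_{s,1},\dots,m_{s,k}}$, and setting $u_2=\cdots=u_n=0$ forces $m_s'=m_{s,i}$ in row $i$. What survives is exactly the sum over tuples with $\sum_i m_{s,i}=m_s$, the determinant $\det_{k\times k}\!\big(g_{2i-j+2\sum_{s=2}^n sm_{s,i}}(u)\big)$ and the remaining operator $(d/du)^{m_1}\big|_{u=0}$ (with $u_1$ renamed $u$). Formula (\ref{integral 2}) differs from (\ref{integral 1}) only through the single extra power of $w_i$, which shifts every $g$-index down by $1$ and the power of $N$ from $k(k+1)/2$ to $k(k-1)/2$, so both are obtained at once.

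The substance lies not in any one step but in the bookkeeping: pinning down the exact index pattern of the determinant (hence the transpose–reversal and its sign), tracking the powers of $N$ consistently through the two rescalings $w\mapsto w/N$ and $u_j\mapsto u_jN^{2j}$, and verifying that the auxiliary derivatives collapse precisely into the advertised multinomial sum. Interchanging the derivatives and the expansion of Lemma \ref{hypergeometric integral} with the contour integrals is harmless, since the integrands are entire in $x,y,u_1,\dots,u_n$ and only finitely many terms of that expansion survive differentiation at $0$.
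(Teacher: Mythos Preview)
Your proposal is correct and follows essentially the same route as the paper's proof: both introduce auxiliary parameters to linearize the Vandermondes and power sums, factorize the $k$-fold integral, apply Lemma~\ref{Vandermonde 2} to obtain a $k\times k$ determinant of single contour integrals, rescale to extract the powers of $N$, identify the entries via Lemma~\ref{hypergeometric integral}, and finish with Lemma~\ref{determinant derivative}. The only cosmetic difference is that the paper absorbs $e^{N\sum w_i}$ into the $\Delta(w)$ representation (evaluating the corresponding variables at $N$ rather than $0$) and assigns the $w^2$-Vandermonde to the row index from the outset, whereas you keep the $e^{Nw}$ factor separate and compensate with an extra transpose; the resulting index bookkeeping and sign $(-1)^{k(k-1)/2}$ come out the same.
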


\begin{proof}
First, note that

\begin{equation*}
\Delta(w^2)=\Delta \left( \frac{d}{dX} \right) \left. \exp \left( \sum_{i=1}^k w_i^2 X_i \right) \right|_{X_i=0},
\end{equation*}
and

\begin{equation*}
\Delta(w) \, e^{N \sum_{i=1}^k w_i}=\Delta \left( \frac{d}{dY} \right) \left. \exp \left( \sum_{i=1}^k w_i Y_i \right) \right|_{Y_i=N}.
\end{equation*}
We may also write

\begin{equation*}
\prod_{j=1}^n \left( \sum_{i=1}^k \frac{1}{w_i^{2j}} \right)^{m_j}=\prod_{j=1}^n \left( \frac{d}{dt_j} \right)^{m_j} \left. \exp \left( \sum_{j=1}^n t_j \sum_{i=1}^k \frac{1}{w_i^{2j}} \right) \right|_{t_j=0}.
\end{equation*}
Then, we have that

\begin{align*}
& \frac{1}{(2\pi\i)^k} \oint \cdots \oint \Delta(w) \Delta(w^2) \, e^{N \sum_{i=1}^k w_i} \prod_{j=1}^n \left( \sum_{i=1}^k \frac{1}{w_i^{2j}} \right)^{m_j} \prod_{i=1}^k \frac{dw_i}{w_i^{2k}} \\
& =\Delta \left( \frac{d}{dX} \right) \Delta \left( \frac{d}{dY} \right) \prod_{j=1}^n \left( \frac{d}{dt_j} \right)^{m_j} \\
&\qquad \times \left. \frac{1}{(2\pi\i)^k} \oint \cdots \oint \exp \left( \sum_{i=1}^k \left( w_i^2 X_i+w_i Y_i+\sum_{j=1}^n \frac{t_j}{w_i^{2j}} \right) \right) \prod_{i=1}^k \frac{dw_i}{w_i^{2k}} \right|_{\substack{X_i=0, \\ Y_i=N, \\ t_j=0}} \\
& =\prod_{j=1}^n \left( \frac{d}{dt_j} \right)^{m_j} \Delta \left( \frac{d}{dX} \right) \Delta \left( \frac{d}{dY} \right) \left. \prod_{i=1}^k \left( \frac{1}{2\pi\i} \oint_{|w|=1} \exp \left( w^2 X_i+w Y_i+\sum_{j=1}^n \frac{t_j}{w^{2j}} \right) \frac{dw}{w^{2k}} \right) \right|_{\substack{X_i=0, \\ Y_i=N, \\ t_j=0}} \\
& =\prod_{j=1}^n \left( \frac{d}{dt_j} \right)^{m_j} \left. k! \det_{k\times k} \left( \frac{d^{i+j-2}}{dX^{i-1} dY^{j-1}} \frac{1}{2\pi\i} \oint_{|w|=1} \exp \left( w^2 X+w Y+\sum_{l=1}^n \frac{t_l}{w^{2l}} \right) \frac{dw}{w^{2k}} \right) \right|_{\substack{X=0, \\ Y=N, \\ t_l=0}},
\end{align*}
where the last line is by Lemma \ref{Vandermonde 2}. Now,

\begin{align*}
& \frac{d^{i+j-2}}{dX^{i-1} dY^{j-1}} \left. \frac{1}{2\pi\i} \oint \exp \left( w^2 X+w Y+\sum_{l=1}^n \frac{t_l}{w^{2l}} \right) \frac{dw}{w^{2k}} \right|_{\substack{X=0, \\ Y=N}} \\
&\qquad =\frac{1}{2\pi\i} \oint \exp \left( w N+\sum_{l=1}^n \frac{t_l}{w^{2l}} \right) \frac{dw}{w^{2k-2i-j+3}} \\
&\qquad =\frac{N^{2k-2i-j+2}}{2\pi\i} \oint \exp \left( w+\sum_{l=1}^n \frac{N^{2l} t_l}{w^{2l}} \right) \frac{dw}{w^{2k-2i-j+3}}.
\end{align*}
Therefore, we have

\begin{align*}
& \frac{1}{(2\pi\i)^k} \oint \cdots \oint \Delta(w) \Delta(w^2) \, e^{N \sum_{i=1}^k w_i} \prod_{j=1}^n \left( \sum_{i=1}^k \frac{1}{w_i^{2j}} \right)^{m_j} \prod_{i=1}^k \frac{dw_i}{w_i^{2k}} \nonumber \\
= & k! \prod_{j=1}^n \left( \frac{d}{dt_j} \right)^{m_j} \left. \det_{k\times k} \left( \frac{N^{2k-2i-j+2}}{2\pi\i} \oint \exp \left( w+\sum_{l=1}^n \frac{N^{2l} t_l}{w^{2l}} \right) \frac{dw}{w^{2k-2i-j+3}} \right) \right|_{t_j=0} \nonumber \\
= & k! N^{k(k+1)/2} \prod_{j=1}^n \left( \frac{d}{dt_j} \right)^{m_j} \left. \det_{k\times k} \left( \frac{1}{2\pi\i} \oint \exp \left( w+\sum_{l=1}^n \frac{N^{2l} t_l}{w^{2l}} \right) \frac{dw}{w^{2k-2i-j+3}} \right) \right|_{t_j=0} \nonumber \\
= & k! N^{k(k+1)/2+\sum_{j=1}^n 2j m_j} \prod_{j=1}^n \left( \frac{d}{du_j} \right)^{m_j} \left. \det_{k\times k} \left( \frac{1}{2\pi\i} \oint \exp \left( w+\sum_{l=1}^n \frac{u_l}{w^{2l}} \right) \frac{dw}{w^{2k-2i-j+3}} \right) \right|_{u_j=0} \nonumber \\
= &( -1)^{k(k-1)/2} k! N^{k(k+1)/2+\sum_{j=1}^n 2j m_j} \prod_{j=1}^n \left( \frac{d}{du_j} \right)^{m_j} \left. \det_{k\times k} \left( \frac{1}{2\pi\i} \oint \exp \left( w+\sum_{l=1}^n \frac{u_l}{w^{2l}} \right) \frac{dw}{w^{2i-j+1}} \right) \right|_{u_j=0},
\end{align*}
where we have used the fact that $\det_{k\times k} (N^{-2i-j} a_{i,j})=N^{-3k(k+1)/2} \det_{k\times k} (a_{i,j})$. Also, the fourth line follows from the change of variables $u_j=N^{2j} t_j$ and in the last line we have interchanged the rows of the matrix.

Next, by Lemma \ref{hypergeometric integral}, the contour integral appearing in the determinant is

\begin{equation*}
\frac{1}{2\pi\i} \oint \exp \left( w+\sum_{l=1}^n \frac{u_l}{w^{2l}} \right) \frac{dw}{w^{2i-j+1}}=\sum_{l_2, \dots, l_n=0}^{\infty} \left( \prod_{s=2}^n \frac{u_s^{l_s}}{l_s!} \right) g_{2i-j+2\sum_{s=2}^n s l_s} (u_1).
\end{equation*}
We use Lemma \ref{determinant derivative} to carry out the differentiation of the determinant with respect to $u_2, \dots, u_n$. This gives us

\begin{align*}
& \prod_{j=1}^n \left( \frac{d}{du_j} \right)^{m_j} \left. \det_{k\times k} \left( \frac{1}{2\pi\i} \oint \exp \left( w+\sum_{l=1}^n \frac{u_l}{w^{2l}} \right) \frac{dw}{w^{2i-j+1}} \right) \right|_{u_j=0} \\
& =\sum_{\substack{\sum_{i=1}^k m_{s, i}=m_s \\ s=2, \dots, n}} \left( \prod_{s=2}^n \binom{m_s}{m_{s, 1}, \dots, m_{s, k}} \right) \\
&\qquad \times \left( \frac{d}{du_1} \right)^{m_1} \left.  \det_{k\times k} \left( \prod_{s=2}^n \left( \frac{d}{du_s} \right)^{m_{s, i}} \sum_{l_2, \dots, l_n=0}^{\infty} \left( \prod_{s=2}^n \frac{u_s^{l_s}}{l_s!} \right) g_{2i-j+2\sum_{s=2}^n s l_s} (u_1) \right) \right|_{u_j=0} \\
& =\sum_{\substack{\sum_{i=1}^k m_{s, i}=m_s \\ s=2, \dots, n}} \left( \prod_{s=2}^n \binom{m_s}{m_{s, 1}, \dots, m_{s, k}} \right) \left( \frac{d}{du} \right)^{m_1} \left. \det_{k\times k} \left( g_{2i-j+2 \sum_{s=2}^n s m_{s, i}}(u) \right) \right|_{u=0}.
\end{align*}
Putting it all together yields (\ref{integral 1}). The proof of (\ref{integral 2}) is similar.
\end{proof}

\begin{proposition} \label{2nd integral prop}
Let $k \geq 1$ and $m_j$ be integers for $j=1, \dots, k$. Then we have

\begin{equation*}
\frac{1}{(2\pi\i)^k} \oint \cdots \oint_{|w_i|=1} \frac{\Delta(w) \Delta(w^2) \, e^{N \sum_{i=1}^k w_i}}{\prod_{j=1}^k w_j^{2k+m_j}} \prod_{i=1}^k dw_i=\sum_{\mu \in S_k} \det_{k \times k} \left( \frac{N^{2k+m_{\mu(i)}-2i-j+2}}{\Gamma(2k+m_{\mu(i)}-2i-j+3)} \right).
\end{equation*}
\end{proposition}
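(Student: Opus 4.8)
The plan is to adapt the argument used for Proposition~\ref{1st integral prop}, which here simplifies considerably: the integrand involves only $e^{N\sum_i w_i}$ with no auxiliary variables $t_j$, so no appeal to Lemma~\ref{hypergeometric integral} or Lemma~\ref{determinant derivative} is needed. The one genuinely new feature is that the exponents $2k+m_j$ of $w_j$ in the denominator depend on the index $j$, so one must use the general (non-symmetric) form of Lemma~\ref{Vandermonde 2} rather than the case $f_1=\cdots=f_k$.

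First I would rewrite the two Vandermonde factors as Vandermonde determinants of differential operators,
\[
\Delta(w^2) = \Delta\!\left(\frac{d}{dX}\right)\left.\exp\!\left(\sum_{i=1}^k w_i^2 X_i\right)\right|_{X_i=0}, \qquad \Delta(w)\,e^{N\sum_i w_i} = \Delta\!\left(\frac{d}{dY}\right)\left.\exp\!\left(\sum_{i=1}^k w_i Y_i\right)\right|_{Y_i=N},
\]
exactly as in Proposition~\ref{1st integral prop}. Substituting these into the $k$-fold integral and moving the finite-order differential operators outside the integral sign, the integral factors through the variables $w_1,\dots,w_k$, giving
\[
\frac{1}{(2\pi\i)^k}\oint\!\cdots\!\oint \frac{\Delta(w)\Delta(w^2)\,e^{N\sum_i w_i}}{\prod_{j=1}^k w_j^{2k+m_j}}\prod_{i=1}^k dw_i = \Delta\!\left(\frac{d}{dX}\right)\Delta\!\left(\frac{d}{dY}\right)\left.\prod_{i=1}^k f_i(X_i,Y_i)\right|_{\substack{X_i=0,\\ Y_i=N}},
\]
where $f_i(X,Y) := \frac{1}{2\pi\i}\oint_{|w|=1}\exp(w^2 X + wY)\,w^{-(2k+m_i)}\,dw$.

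Next I would apply the general form of Lemma~\ref{Vandermonde 2} to rewrite the right-hand side as $\sum_{\mu\in S_k}\det_{k\times k}\big(\frac{d^{i+j-2}}{dX^{i-1}dY^{j-1}}f_{\mu(i)}(X,Y)\big)\big|_{X=0,\,Y=N}$. Differentiating under the integral sign, each $d/dX$ contributes a factor $w^2$ and each $d/dY$ a factor $w$, so the $(i,j)$-entry becomes $\frac{1}{2\pi\i}\oint_{|w|=1} e^{wN}\,w^{-(2k+m_{\mu(i)}-2i-j+3)}\,dw$. By the residue theorem this equals the coefficient of $w^{\,2k+m_{\mu(i)}-2i-j+2}$ in $e^{wN}=\sum_{r\geq 0}N^r w^r/r!$, namely $N^{\,2k+m_{\mu(i)}-2i-j+2}/\Gamma(2k+m_{\mu(i)}-2i-j+3)$ when that exponent is non-negative and $0$ otherwise; since $1/\Gamma$ vanishes at non-positive integers, this single expression is valid in both cases, and assembling the entries yields the claimed formula. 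In contrast with Proposition~\ref{1st integral prop}, no interchange of rows is carried out, so no sign $(-1)^{k(k-1)/2}$ appears.

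The steps above are all routine; the only points deserving a line of justification are the interchange of the operators $\Delta(d/dX)\Delta(d/dY)$ with the contour integrals -- immediate because the integrand is entire in $X$ and $Y$ and the contours are compact -- and the bookkeeping of the $\Gamma$-function convention, which correctly encodes the vanishing of the residue precisely when $2k+m_{\mu(i)}-2i-j+2<0$. I do not anticipate any genuine obstacle: this proposition is essentially the ``$n=0$'' specialisation of the machinery already developed for Proposition~\ref{1st integral prop}, with the full non-symmetric dependence on the $m_j$ retained.
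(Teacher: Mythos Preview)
Your proposal is correct and follows essentially the same route as the paper's proof: rewrite $\Delta(w)\Delta(w^2)e^{N\sum w_i}$ via the two Vandermonde differential operators, factor the integral as $\Delta(d/dX)\Delta(d/dY)\prod_i f_i(X_i,Y_i)$ with $f_i(X,Y)=\frac{1}{2\pi\i}\oint e^{w^2X+wY}w^{-(2k+m_i)}\,dw$, apply the general (non-symmetric) form of Lemma~\ref{Vandermonde 2}, and then evaluate each entry as a residue to obtain the $N^{\cdots}/\Gamma(\cdots)$ expression. Your remarks on the $1/\Gamma$ convention and the absence of a row interchange are accurate and match the paper.
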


\begin{proof}
As in the proof of Proposition \ref{1st integral prop}, we write

\begin{equation*}
\Delta(w) \Delta(w^2) \, e^{N \sum_{i=1}^k w_i}=\Delta \left( \frac{d}{dX} \right) \Delta \left( \frac{d}{dY} \right) \left. \exp \left( \sum_{i=1}^k w_i^2 X_i+w_i Y_i \right) \right|_{\substack{X_i=0, \\ Y_i=N}}.
\end{equation*}
Then, we have that

\begin{equation*}
\frac{1}{(2\pi\i)^k} \oint \cdots \oint \frac{\Delta(w) \Delta(w^2) \, e^{N \sum_{i=1}^k w_i}}{\prod_{j=1}^k w_j^{2k+m_j}} \prod_{i=1}^k dw_i=\Delta \left( \frac{d}{dX} \right) \Delta \left( \frac{d}{dY} \right) \left. \prod_{i=1}^k f_i(X_i, Y_i) \right|_{\substack{X_i=0, \\ Y_i=N}},
\end{equation*}
where

\begin{equation*}
f_i(X_i, Y_i)=\frac{1}{2\pi\i} \oint_{|w|=1} \frac{e^{(w^2 X_i+w Y_i)}}{w^{2k+m_i}} dw.
\end{equation*}
So, by Lemma \ref{Vandermonde 2}, we have

\begin{equation*}
\frac{1}{(2\pi\i)^k} \oint \cdots \oint \frac{\Delta(w) \Delta(w^2) \, e^{N \sum_{i=1}^k w_i}}{\prod_{j=1}^k w_j^{2k+m_j}} \prod_{i=1}^k dw_i=\sum_{\mu \in S_k} \left. \det_{k \times k} \left( \frac{d^{i+j-2}}{dX^{i-1} dY^{j-1}} f_{\mu(i)}(X,Y) \right) \right|_{\substack{X=0, \\ Y=N}}.
\end{equation*}
Now,

\begin{align*}
\frac{d^{i+j-2}}{dX^{i-1} dY^{j-1}} \left. f_{\mu(i)}(X,Y) \right|_{\substack{X=0, \\ Y=N}}=\frac{1}{2\pi\i} \oint_{|w|=1} \frac{e^{N w}}{w^{2k+m_{\mu(i)}-2i-j+3}} dw=\frac{N^{2k+m_{\mu(i)}-2i-j+2}}{\Gamma(2k+m_{\mu(i)}-2i-j+3)},
\end{align*}
and the proposition follows.
\end{proof}

Lastly, we include a lemma that allows us to explicitly evaluate certain determinants whose entries are reciprocals of the Gamma function.

\begin{lemma} \label{Gamma determinant}
Let $k \geq 1$ and $m_j \geq 0$ be integers for $j=1, \dots, k$. Then, we have

\begin{equation*}
\det_{k \times k} \left( \frac{1}{\Gamma(2k+m_i-2i-j+2)} \right)=\prod_{j=1}^k \frac{1}{(2k+m_j-2j)!} \prod_{1 \leq i,j \leq k} (m_j-m_i-2j+2i).
\end{equation*}
\end{lemma}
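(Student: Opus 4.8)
The plan is to extract a factorial from each row and reduce the remaining determinant to an ordinary Vandermonde determinant. First I would set $c_i:=2k+m_i-2i$ for $i=1,\dots,k$; since $m_i\ge 0$ and $i\le k$ we have $c_i\ge 0$, and the $(i,j)$ entry of the matrix equals $1/\Gamma(c_i-j+2)=1/(c_i-j+1)!$, where we use the convention $1/n!=0$ for negative integers $n$ (consistent with $1/\Gamma$ vanishing at the non-positive integers, and with the limiting convention adopted for $g_m$ after (\ref{hypergeometric def})).

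The key identity is that, for each $i$ and $j$,
\[
\frac{1}{(c_i-j+1)!}=\frac{1}{c_i!}\,c_i(c_i-1)\cdots(c_i-j+2),
\]
where the product on the right has $j-1$ factors and is understood to equal $1$ when $j=1$. When $c_i-j+1\ge 0$ this is just $\frac{1}{(c_i-j+1)!}=\frac{1}{c_i!}\cdot\frac{c_i!}{(c_i-j+1)!}$; when $c_i-j+1<0$, i.e.\ $j-1>c_i$, both sides vanish since the falling-factorial product then contains the factor $0$ (recall $c_i$ is a non-negative integer). Writing $Q_j(c):=c(c-1)\cdots(c-j+2)$, a monic polynomial in $c$ of degree $j-1$, the $(i,j)$ entry is $\frac{1}{c_i!}Q_j(c_i)$, so pulling the common factor $1/c_i!$ out of the $i$-th row gives
\[
\det_{k\times k}\left(\frac{1}{\Gamma(2k+m_i-2i-j+2)}\right)=\left(\prod_{i=1}^k\frac{1}{c_i!}\right)\det_{k\times k}\bigl(Q_j(c_i)\bigr).
\]

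It then remains to evaluate $\det_{k\times k}(Q_j(c_i))$, which I would do by column reduction: since $Q_j$ is monic of degree $j-1$, subtracting appropriate rational linear combinations of columns $1,\dots,j-1$ replaces the $j$-th column by $(c_i^{\,j-1})_{1\le i\le k}$ without changing the determinant, so $\det_{k\times k}(Q_j(c_i))=\det_{k\times k}(c_i^{\,j-1})=\prod_{1\le i<j\le k}(c_j-c_i)$ by the Vandermonde formula. Substituting back $c_i!=(2k+m_i-2i)!$ and $c_j-c_i=m_j-m_i-2j+2i$ then yields the stated identity (the product being over $1\le i<j\le k$; note a factor with $i=j$ would vanish). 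There is no genuine obstacle here: the only point demanding care is the presence of non-positive integer arguments of $\Gamma$, which is exactly what the second display above accounts for, and everything else is a routine generalized Vandermonde evaluation.
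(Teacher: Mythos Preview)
Your proof is correct. The approach differs from the paper's: the paper simply quotes equation~(4.13) from Normand \cite{Normand04}, namely
\[
\det_{k\times k}\!\left(\frac{1}{\Gamma(z_i-j+1)}\right)=\frac{\Delta(z_1,\dots,z_k)}{\prod_{j=1}^k\Gamma(z_j)},
\]
and substitutes $z_i=2k+m_i-2i+1$, whereas you give a self-contained derivation by pulling $1/c_i!$ out of each row and then recognising the remaining matrix $(Q_j(c_i))$ with $Q_j$ the monic falling factorial of degree $j-1$ as column-equivalent to the Vandermonde matrix $(c_i^{\,j-1})$. Your argument is in effect an elementary proof of (the relevant case of) Normand's identity, so it buys independence from the external reference at the cost of a few more lines; the paper's citation is terser but relies on that reference. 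Your handling of the non-positive arguments of $\Gamma$ via the vanishing of the falling factorial when $c_i<j-1$ is the only subtle point and you dealt with it correctly. You also rightly note that the product in the statement should be over $1\le i<j\le k$ (the paper's own proof confirms this), so that is a typo in the lemma as printed.
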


\begin{proof}
With our notation, equation (4.13) in \cite{Normand04} can be written as

\begin{equation*}
\det_{k \times k} \left( \frac{1}{\Gamma(z_i-j+1)} \right)=\frac{\Delta(z_1, \dots, z_k)}{\prod_{j=1}^k \Gamma(z_j)}.
\end{equation*}
We take $z_i=2k+m_i-2i+1$ for $i=1, \dots, k$. Then, we have

\begin{align*}
\det_{k \times k} \left( \frac{1}{\Gamma(2k+m_i-2i-j+2)} \right) &=\prod_{i=1}^k \Gamma(2k+m_i-2i+1)^{-1} \prod_{1 \leq i<j \leq k} (m_j-2j-m_i+2i).
\end{align*}
Since $2k+m_i-2i+1 \geq 1$ for $1 \leq i \leq k$, we have that $\Gamma(2k+m_i-2i+1)=(2k+m_i-2i)!$ which completes the proof.

\end{proof}

\section{Proofs of the main results}

In this section we will present the proofs of our main results. Our strategy is to obtain the joint moments by differentiating the corresponding shifted moments with respect to the shifts. Indeed, one may check by induction that for $G(2N)\in\{Sp(2N), \, SO(2N), \, O^-(2N)\}$, we have that

\begin{align*}
& \int_{G(2N)} \left( \Lambda_A^{(n_1)}(1) \right)^{k_1} \left( \Lambda_A^{(n_2)}(1) \right)^{k_2} dA \\
&\qquad =\prod_{j=1}^{k_1} \left( \frac{d}{d\alpha_j} \right)^{n_1} \prod_{j=k_1+1}^k \left( \frac{d}{d\alpha_j} \right)^{n_2} \left. I(G(2N);e^{-\alpha_1},\dots,e^{-\alpha_k}) \right|_{\alpha_j=0} \left( 1+O(N^{-1}) \right),
\end{align*}
where $k=k_1+k_2$. Also, the error terms in Lemmas \ref{Sp shifted moments}-\ref{O^- shifted moments} are uniform in $\alpha$ so we obtain an asymptotic formula after performing the differentiation.

\subsection{The unitary symplectic group $Sp(2N)$}

\begin{proof}[Proof of Theorem \ref{Sp theorem 1}]

By the above argument and Lemma \ref{Sp shifted moments}, we have that

\begin{equation} \label{start of proofs}
\int_{Sp(2N)} \left( \Lambda_A^{(n_1)}(1) \right)^{k_1} \left( \Lambda_A^{(n_2)}(1) \right)^{k_2} dA=\frac{(-1)^{k(k-1)/2}}{k!} J_{k_1, k_2}^{Sp}(n_1, n_2) \left( 1+O(N^{-1}) \right), 
\end{equation}
where

\begin{align} \label{J integral def}
& J_{k_1, k_2}^{Sp}(n_1, n_2) \nonumber \\
& =\prod_{j=1}^{k_1} \left( \frac{d}{d\alpha_j} \right)^{n_1} \prod_{j=k_1+1}^k \left( \frac{d}{d\alpha_j} \right)^{n_2} \left. \frac{1}{(2\pi\i)^k} \oint \cdots \oint_{|w_i|=1} \frac{\Delta(w) \Delta(w^2) \, e^{N \sum_{i=1}^k (w_i-\alpha_i)}}{\prod_{1\leq i,j\leq k} (w_i^2-\alpha_j^2)} \, \prod_{i=1}^k dw_i \right|_{\alpha_j=0}.
\end{align}
We use Lemma \ref{2nd derivative} to carry out the differentiation and obtain

\begin{align*}
& J_{k_1, k_2}^{Sp}(n_1, n_2) \\ 
& =\frac{1}{(2\pi\i)^k} \oint \cdots \oint  \Bigg( \sum_{\substack{a_1+2a_2+\cdots+n_1 a_{n_1}=n_1 \\ a_3=a_5=\cdots=0}} \frac{n_1!}{a_1!\cdots a_{n_1}!} (-N)^{a_1} \prod_{j=1}^{[n_1/2]} \left( \frac{1}{j} \sum_{i=1}^k \frac{1}{w_i^{2j}} \right)^{a_{2j}} \Bigg)^{k_1} \\
&\quad \times \Bigg( \sum_{\substack{b_1+2b_2+\cdots+n_2 b_{n_2}=n_2 \\ b_3=b_5=\cdots=0}} \frac{n_2!}{b_1!\cdots b_{n_2}!} (-N)^{b_1} \prod_{j=1}^{[n_2/2]} \left( \frac{1}{j} \sum_{i=1}^k \frac{1}{w_i^{2j}} \right)^{b_{2j}} \Bigg)^{k_2} \Delta(w) \Delta(w^2) \, e^{N \sum_{i=1}^k w_i} \prod_{i=1}^k \frac{dw_i}{w_i^{2k}}.
\end{align*}
Recall the definition of the tuples ${\bf a}_i$ and ${\bf b}_i$ defined in the statement of the theorem. Then, we can expand the brackets in the integrand of $J_{k_1, k_2}^{Sp}(n_1, n_2)$ as

\begin{align*}
& \Bigg( \sum_{\substack{a_1+2a_2+\cdots+n_1 a_{n_1}=n_1 \\ a_3=a_5=\cdots=0}} \frac{n_1!}{a_1!\cdots a_{n_1}!} (-N)^{a_1} \prod_{j=1}^{[n_1/2]} \left( \frac{1}{j} \sum_{i=1}^k \frac{1}{w_i^{2j}} \right)^{a_{2j}} \Bigg)^{k_1} \\
& =\sum_{u_1+\cdots+u_P=k_1} \binom{k_1}{u_1, \dots, u_P} \frac{(n_1)!^{k_1}}{\prod_{i=1}^P ({\bf a}_i!)^{u_i}} (-N)^{\sum_{i=1}^P u_i a_{i,0}} \prod_{j=1}^{[n_1/2]} \left( \frac{1}{j} \sum_{l=1}^k \frac{1}{w_l^{2j}} \right)^{\sum_{i=1}^P u_i a_{i,j}},
\end{align*}
with a similar expression for the bracket to the power of $k_2$ in the integrand. Using these expansions, our expression for $J_{k_1, k_2}^{Sp}(n_1, n_2)$ becomes

\begin{align*}
J_{k_1, k_2}^{Sp}(n_1, n_2) & =\sum_{u_1+\cdots+u_P=k_1} \binom{k_1}{u_1, \dots, u_P} \frac{(n_1)!^{k_1}}{\prod_{i=1}^P ({\bf a}_i!)^{u_i} (\prod_{j=1}^{[n_1/2]} j^{\sum_{i=1}^P u_i a_{i, j}})} (-N)^{\sum_{i=1}^P u_i a_{i,0}} \\
& \times \sum_{v_1+\cdots+v_Q=k_2} \binom{k_2}{v_1, \dots, v_Q} \frac{(n_2)!^{k_2}}{\prod_{i=1}^Q ({\bf b}_i!)^{v_i} (\prod_{j=1}^{[n_2/2]} j^{\sum_{i=1}^Q v_i b_{i, j}})} (-N)^{\sum_{i=1}^Q v_i b_{i,0}} \\
& \times \frac{1}{(2\pi\i)^k} \oint \cdots \oint_{|w_i|=1} \Delta(w) \Delta(w^2) \, e^{N \sum_{i=1}^k w_i} \prod_{j=1}^{[n_2/2]} \left( \sum_{l=1}^k \frac{1}{w_l^{2j}} \right)^{m_j} \prod_{i=1}^k \frac{dw_i}{w_i^{2k}},
\end{align*}
where $m_j=\sum_{i=1}^P u_i a_{i, j}+\sum_{i=1}^Q v_i b_{i,j}$ for $j=1, \dots, [n_1/2]$ and $m_j=\sum_{i=1}^Q v_i b_{i,j}$ for $j=[n_1/2]+1, \dots, [n_2/2]$. We now apply Proposition \ref{1st integral prop} to the contour integral above with these $m_j$ which gives us that

\begin{align*}
J_{k_1, k_2}^{Sp}(n_1, n_2) & =(-1)^{k(k-1)/2} k! N^{k(k+1)/2}\sum_{u_1+\cdots+u_P=k_1} \binom{k_1}{u_1, \dots, u_P} \frac{(n_1)!^{k_1} (-N)^{\sum_{i=1}^P u_i a_{i,0}}}{\prod_{i=1}^P ({\bf a}_i!)^{u_i} (\prod_{j=1}^{[n_1/2]} j^{\sum_{i=1}^P u_i a_{i, j}})} \\
& \times \sum_{v_1+\cdots+v_Q=k_2} \binom{k_2}{v_1, \dots, v_Q} \frac{(n_2)!^{k_2} (-N)^{\sum_{i=1}^Q v_i b_{i,0}}}{\prod_{i=1}^Q ({\bf b}_i!)^{v_i} (\prod_{j=1}^{[n_2/2]} j^{\sum_{i=1}^Q v_i b_{i, j}})} \cdot N^{2 \sum_{j=1}^{[n_2/2]} j m_j} \\
& \times \sum_{\substack{\sum_{i=1}^k m_{s, i}=m_s \\ s=2, \dots, n}} \left( \prod_{s=2}^n \binom{m_s}{m_{s, 1}, \dots, m_{s, k}} \right) \left( \frac{d}{du} \right)^{m_1} \left. \det_{k\times k} \left( g_{2i-j+2 \sum_{s=2}^n s m_{s, i}}(u) \right) \right|_{u=0}.
\end{align*}
Using the definition of $m_j$, we compute the power of $N$ in the summand as

\begin{align*}
\sum_{i=1}^P u_i a_{i, 0}+\sum_{i=1}^Q v_i b_{i, 0}+2 \sum_{j=1}^{[n_2/2]} jm_j & =\sum_{i=1}^P u_i \left( a_{i, 0}+2 \sum_{j=1}^{[n_1/2]} j a_{i, j} \right)+\sum_{i=1}^Q v_i \left( b_{i, 0}+2 \sum_{j=1}^{[n_2/2]} j b_{i, j} \right) \\
& =n_1 \sum_{i=1}^P u_i+n_2 \sum_{i=1}^Q v_i \\
& =k_1 n_1+k_2 n_2.
\end{align*}
Also, since $a_{i, 0} \equiv n_1 \ (\textrm{mod} \ 2)$ and $b_{i, 0} \equiv n_2 \ (\textrm{mod} \ 2)$ for all $i$, the factor of $(-1)$ in the summand is

\begin{equation*}
(-1)^{\sum_{i=1}^P u_i a_{i, 0}+\sum_{i=1}^Q v_i b_{i, 0}}=(-1)^{n_1 \sum_{i=1}^P u_i+n_2 \sum_{i=1}^Q v_i}=(-1)^{k_1 n_1+k_2 n_2}. 
\end{equation*}
Combining these two observations with our final expression for $J_{k_1, k_2}^{Sp}(n_1, n_2)$ completes the proof.

\end{proof}

\begin{proof}[Proof of Theorem \ref{Sp theorem 2}]

We begin as in the proof of Theorem \ref{Sp theorem 1} with (\ref{start of proofs}) and (\ref{J integral def}). We use Lemma \ref{1st derivative} for the derivatives in this case which gives us that

\begin{align*}
J_{k_1, k_2}^{Sp}(n_1, n_2) & =\frac{1}{(2\pi\i)^k} \oint \cdots \oint_{|w_i|=1}  \Bigg( \sum_{m=0}^{n_1} \binom{n_1}{m} (-N)^{n_1-m} \, m! \sum_{\substack{l_1+\cdots+l_k=m \\ l_j \ \textrm{even}}} \prod_{i=1}^k \frac{1}{w_i^{l_i}} \Bigg)^{k_1} \\
& \times \Bigg( \sum_{m=0}^{n_2} \binom{n_2}{m} (-N)^{n_2-m} \, m! \sum_{\substack{l_1+\cdots+l_k=m \\ l_j \ \textrm{even}}} \prod_{i=1}^k \frac{1}{w_i^{l_i}} \Bigg)^{k_2} \Delta(w) \Delta(w^2) \, e^{N \sum_{i=1}^k w_i} \prod_{i=1}^k \frac{dw_i}{w_i^{2k}}.
\end{align*}
Rather than expand the brackets in the integrand, we write them as

\begin{align*}
& \Bigg( \sum_{m=0}^{n_1} \binom{n_1}{m} (-N)^{n_1-m} \, m! \sum_{\substack{l_1+\cdots+l_k=m \\ l_j \ \textrm{even}}} \prod_{i=1}^k \frac{1}{w_i^{l_i}} \Bigg)^{k_1} \\
& = \Bigg( \sum_{\substack{\sum_{j=1}^k l_j \leq n_1 \\ l_j \ \textrm{even}}} (-N)^{n_1-\sum_{j=1}^k l_j} \binom{n_1}{\sum_{j=1}^k l_j} \left( \sum_{j=1}^k l_j \right)! \prod_{j=1}^k \frac{1}{w_j^{l_j}} \Bigg)^{k_1} \\
& =\sum_{\substack{2 \sum_{j=1}^k l_{i, j} \leq n_1 \\ i=1, \dots, k_1}} \prod_{i=1}^{k_1} \left( (-N)^{n_1-2 \sum_{j=1}^k l_{i, j}} \binom{n_1}{2 \sum_{j=1}^k l_{i, j}} \left( 2 \sum_{j=1}^k l_{i, j} \right)! \right) \prod_{j=1}^k \frac{1}{w_j^{2 \sum_{i=1}^{k_1} l_{i, j}}},
\end{align*}
with a similar expression for the second bracket to the $k_2$. We then have that

\begin{align*}
J_{k_1, k_2}^{Sp}(n_1, n_2) & =\sum_{\substack{2 \sum_{j=1}^k l_{i, j} \leq n_1 \\ i=1, \dots, k_1}} \prod_{i=1}^{k_1} \left( (-N)^{n_1-2 \sum_{j=1}^k l_{i, j}} \binom{n_1}{2 \sum_{j=1}^k l_{i, j}} \left( 2 \sum_{j=1}^k l_{i, j} \right)! \right) \\
&\qquad \times \sum_{\substack{2 \sum_{j=1}^k m_{i, j} \leq n_2 \\ i=1, \dots, k_2}} \prod_{i=1}^{k_2} \left( (-N)^{n_2-2 \sum_{j=1}^k m_{i, j}} \binom{n_2}{2 \sum_{j=1}^k m_{i, j}} \left( 2 \sum_{j=1}^k m_{i, j} \right)! \right) \\
&\qquad \times \frac{1}{(2\pi\i)^k} \oint \cdots \oint_{|w_i|=1} \frac{\Delta(w) \Delta(w^2) \, e^{N \sum_{i=1}^k w_i}}{\prod_{j=1}^k w_j^{2k+2\sum_{i=1}^{k_1} l_{i, j}+2\sum_{i=1}^{k_2} m_{i, j}}} \prod_{i=1}^k dw_i \\
& =\sum_{\substack{2 \sum_{j=1}^k l_{i, j} \leq n_1 \\ i=1, \dots, k_1}} \sum_{\substack{2 \sum_{j=1}^k m_{i, j} \leq n_2 \\ i=1, \dots, k_2}} (n_1!)^{k_1} (n_2!)^{k_2} (-N)^{k_1 n_1+k_2 n_2-2 \sum_{j=1}^k (\sum_{i=1}^{k_1} l_{i, j}+\sum_{i=1}^{k_2} m_{i j})} \\
&\qquad \times \left( \prod_{i=1}^{k_1} \frac{1}{(n_1-2 \sum_{j=1}^k l_{i ,j})!} \right) \left( \prod_{i=1}^{k_2} \frac{1}{(n_2-2 \sum_{j=1}^k m_{i, j})!} \right) \\
&\qquad \times \frac{1}{(2\pi\i)^k} \oint \cdots \oint_{|w_i|=1} \frac{\Delta(w) \Delta(w^2) \, e^{N \sum_{i=1}^k w_i}}{\prod_{j=1}^k w_j^{2k+2\sum_{i=1}^{k_1} l_{i, j}+2\sum_{i=1}^{k_2} m_{i, j}}} \prod_{i=1}^k dw_i.
\end{align*}
We set $V_j=2 \sum_{i=1}^{k_1} l_{i, j}+2 \sum_{i=1}^{k_2} m_{i, j}$ for $j=1, \dots, k$. Then, by Proposition \ref{2nd integral prop}, the contour integral in the last line above is equal to

\begin{align*}
\sum_{\mu \in S_k} \det_{k \times k} \left( \frac{N^{2k+V_{\mu(i)}-2i-j+2}}{\Gamma(2k+V_{\mu(i)}-2i-j+3)} \right) & =N^{k(k+1)/2+\sum_{j=1}^k V_j} \sum_{\mu \in S_k} \det_{k \times k} \left( \frac{1}{\Gamma(2k+V_{\mu(i)}-2i-j+3)} \right).
\end{align*}
Hence, our expression for $J_{k_1, k_2}^{Sp}(n_1, n_2)$ becomes

\begin{align*}
J_{k_1, k_2}^{Sp}(n_1, n_2) & =(-1)^{k_1 n_1+k_2 n_2} (n_1!)^{k_1} (n_2!)^{k_2} N^{k(k+1)/2+k_1 n_1+k_2 n_2} \\
& \times \sum_{\mu \in S_k} \sum_{\substack{2 \sum_{j=1}^k l_{i, j} \leq n_1 \\ i=1, \dots, k_1}} \sum_{\substack{2 \sum_{j=1}^k m_{i, j} \leq n_2 \\ i=1, \dots, k_2}} \left( \prod_{i=1}^{k_1} \frac{1}{(n_1-2 \sum_{j=1}^k l_{i ,j})!} \right) \left( \prod_{i=1}^{k_2} \frac{1}{(n_2-2 \sum_{j=1}^k m_{i, j})!} \right) \\
& \times \det_{k \times k} \left( \frac{1}{\Gamma(2k+V_{\mu(i)}-2i-j+3)} \right).
\end{align*}
Now, by an argument similar to that given at the end of the proof of \cite[Theorem 25]{KW23a}, we have that the sums over $l_{i, j}$ and $m_{i, j}$ do not depend on the choice of permutation $\mu$. Thus, we may take $\mu$ to be the identity and replace the sum over $\mu \in S_k$ by $k!$. To finish, we apply Lemma \ref{Gamma determinant} with $m_j=V_j+1$ to the last determinant which gives us

\begin{align*}
J_{k_1, k_2}^{Sp}(n_1, n_2) & =(-1)^{k_1 n_1+k_2 n_2} (n_1!)^{k_1} (n_2!)^{k_2} N^{k(k+1)/2+k_1 n_1+k_2 n_2} \\
& \times \sum_{\mu \in S_k} \sum_{\substack{2 \sum_{j=1}^k l_{i, j} \leq n_1 \\ i=1, \dots, k_1}} \sum_{\substack{2 \sum_{j=1}^k m_{i, j} \leq n_2 \\ i=1, \dots, k_2}} \left( \prod_{i=1}^{k_1} \frac{1}{(n_1-2 \sum_{j=1}^k l_{i ,j})!} \right) \left( \prod_{i=1}^{k_2} \frac{1}{(n_2-2 \sum_{j=1}^k m_{i, j})!} \right) \\
& \times \prod_{j=1}^k \frac{1}{(2k+V_j-2j+1)!} \prod_{1 \leq i<j \leq k} (V_j-V_i-2j+2i).
\end{align*}
The theorem follows.

\end{proof}

\subsection{The special orthogonal group $SO(2N)$ and $O^-(2N)$}

\begin{proof}[Proof of Theorem \ref{SO theorem 1}]

The proof is similar to that of Theorem \ref{Sp theorem 1} but we now use Lemma \ref{SO shifted moments} for the shifted moments. We again use Lemma \ref{2nd derivative} for the derivatives and apply (\ref{integral 2}) in Proposition \ref{1st integral prop} to the resulting contour integral.

\end{proof}

\begin{proof}[Proof of Theorem \ref{SO theorem 2}]

We follow the proof of Theorem \ref{Sp theorem 2} using Lemma \ref{SO shifted moments} for the shifted moments and Lemma \ref{1st derivative} for the derivatives. We then use Proposition \ref{2nd integral prop} with $m_j=V_j-1$ for the contour integral and conclude the proof similarly using Lemma \ref{Gamma determinant}.

\end{proof}

\begin{proof}[Proof of Theorem \ref{O^- theorem}]

Using the argument at the beginning of the section and Lemma \ref{O^- shifted moments}, we have that

\begin{equation*}
\int_{O^-(2N)} \left( \Lambda_A^{(n_1)}(1) \right)^{k_1} \left( \Lambda_A^{(n_2)}(1) \right)^{k_2} dA=\frac{(-1)^{k(k-1)/2} 2^k}{k!} J_{k_1, k_2}^{O^-(2N)}(n_1, n_2) \left( 1+O(N^{-1}) \right),
\end{equation*}
where

\begin{align*}
& J_{k_1, k_2}^{O^-}(n_1, n_2) \\
& =\prod_{j=1}^{k_1} \left( \frac{d}{d\alpha_j} \right)^{n_1} \prod_{j=k_1+1}^k \left( \frac{d}{d\alpha_j} \right)^{n_2} \left. \frac{1}{(2\pi\i)^k} \oint \cdots \oint \frac{\Delta(w) \Delta(w^2) (\prod_{i=1}^k \alpha_i) \, e^{N \sum_{i=1}^k (w_i+\alpha_i)}}{\prod_{1\leq i,j\leq k} (w_i^2-\alpha_j^2)} \, \prod_{i=1}^k dw_i \right|_{\alpha_j=0}.
\end{align*}
For the derivatives, we use the fact that for $n \geq 1$,

\begin{align*}
\frac{d^n}{d\alpha^n} \left. \frac{\alpha e^{N \alpha}}{\prod_{i=1}^k (w_i^2-\alpha^2)} \right|_{\alpha=0}=n \frac{d^{n-1}}{d\alpha^{n-1}} \left. \frac{e^{N \alpha}}{\prod_{i=1}^k (w_i^2-\alpha^2)} \right|_{\alpha=0}.
\end{align*}
Hence, we have that

\begin{align*}
& J_{k_1, k_2}^{O^-}(n_1, n_2)=n_1^{k_1} n_2^{k_2} \\
&\qquad \times \prod_{j=1}^{k_1} \left( \frac{d}{d\alpha_j} \right)^{n_1-1} \prod_{j=k_1+1}^k \left( \frac{d}{d\alpha_j} \right)^{n_2-1} \left. \frac{1}{(2\pi\i)^k} \oint \cdots \oint \frac{\Delta(w) \Delta(w^2) \, e^{N \sum_{i=1}^k (w_i+\alpha_i)}}{\prod_{1\leq i,j\leq k} (w_i^2-\alpha_j^2)} \, \prod_{i=1}^k dw_i \right|_{\alpha_j=0}.
\end{align*}
This integral expression is very similar to the expression for $J_{k_1, k_2}^{Sp(2N)}(n_1-1, n_2-1)$ given in (\ref{J integral def}) and so we can proceed as in the proof of Theorem \ref{Sp theorem 1} or \ref{Sp theorem 2}, simply replacing $(-N)$ by $N$ when we use Lemma \ref{1st derivative} or \ref{2nd derivative}. In either case, we obtain the statement of the theorem.

\end{proof}

\subsection{Proof of Proposition \ref{b(0,n) prop}}

We conclude this setion by proving Proposition \ref{b(0,n) prop}. Let $n\geq 1$ be an integer. Then, by Theorem \ref{Sp theorem 2}, we have that

\begin{align*}
b_{0,1}^{Sp}(0,n) & =\frac{(-1)^n n!}{2^{n+1}} \sum_{2l \leq n} \frac{1}{(n-2l)! (2l+1)!} \\
& =\frac{(-1)^n n!}{2^{n+1} (n+1)!} \sum_{2l \leq n} \binom{n+1}{2l+1} \\
& =\frac{(-1)^n}{2^{n+1} (n+1)} \left( \sum_{2l \leq n} \binom{n}{2l}+\sum_{2l \leq n-1} \binom{n}{2l+1} \right) \\
& =\frac{(-1)^n}{2^{n+1} (n+1)} \sum_{l=0}^n \binom{n}{l} \\
& =\frac{(-1)^n}{2 (n+1)},
\end{align*}
where we have used standard properties of the binomial coefficient. In the same manner, by Theorem \ref{SO theorem 2}, we have

\begin{align*}
b_{0,1}^{SO}(0,n) & =2^{1-n} n! \sum_{2l \leq n} \frac{1}{(n-2l)! (2l)!} \\
& =2^{1-n} \sum_{2l \leq n} \binom{n}{2l} \\
& =2^{1-n} \left( \sum_{2l \leq n-1} \binom{n-1}{2l}+\sum_{2l \leq n} \binom{n-1}{2l-1} \right) \\
& =2^{1-n} \sum_{l=0}^{n-1} \binom{n-1}{l} \\
& =1.
\end{align*}

\section{Numerical results}

Below we give some numerical values for $b_{k_1, k_2}^{Sp}(n_1, n_2)$ and $b_{k_1, k_2}^{SO}(n_1, n_2)$. Values of $b_{k_1, k_2}^{O^-}(n_1, n_2)$ follow from Theorem \ref{O^- theorem} so are omitted. Numerical values for $b_{0,k}^{Sp}(0,2)$ and  $b_{0,k}^{SO}(0,2)$ for $k \leq 10$ are given in \cite[Section 4]{ABPRW14}. 

The following are $b_{0,k}^{Sp}(0,3)$ for $k=1,\dots, 4$:

\begin{equation*}
-\frac{1}{2^3}
\end{equation*}

\begin{equation*}
\frac{23}{2^7 \cdot 3 \cdot 5 \cdot 7}
\end{equation*}

\begin{equation*}
-\frac{1}{2^8 \cdot 5^2 \cdot 7 \cdot 11}
\end{equation*}

\begin{equation*}
\frac{233}{2^{18} \cdot 3^4 \cdot 5^3 \cdot 7^2 \cdot 11}.
\end{equation*}
$b_{0, k}^{Sp}(0,4)$ for $k=1, \dots, 4$:

%\begin{equation*}
%-\frac{31 \cdot 152183}{2^{24} \cdot 3^6 \cdot 5^5 \cdot 7^3 \cdot 11^2 \cdot 13 \cdot 17 \cdot 19}
%\end{equation*}

%\begin{equation*}
%\frac{67 \cdot 269651}{2^{33} \cdot 3^5 \cdot 5^6 \cdot 7^4 \cdot 11^3 \cdot 13^2 \cdot 17 \cdot 19 \cdot 23}.
%\end{equation*}

\begin{equation*}
\frac{1}{2 \cdot 5}
\end{equation*}

\begin{equation*}
\frac{251}{2^4 \cdot 3^2 \cdot 5^2 \cdot 7 \cdot 11}
\end{equation*}

\begin{equation*}
\frac{89 \cdot 13103}{2^9 \cdot 3^5 \cdot 5^3 \cdot 7^2 \cdot 11 \cdot 13 \cdot 17}
\end{equation*}

\begin{equation*}
\frac{1627 \cdot 693731}{2^{10} \cdot 3^5 \cdot 5^5 \cdot 7^3 \cdot 11^2 \cdot 13^2 \cdot 17 \cdot 19 \cdot 23}.
\end{equation*}
We also have $b_{1,1}^{Sp}(n_1,1)$ for $n_1=0,1$:

\begin{equation*}
-\frac{1}{48}, \ \frac{1}{96}.
\end{equation*}
$b_{1,1}^{Sp}(n_1,2)$ for $n_1=0,1,2$:

\begin{equation*}
\frac{1}{80}, \ -\frac{1}{160}, \ \frac{19}{5040}.
\end{equation*}
$b_{1,1}^{Sp}(n_1,3)$ for $n_1=0,1,2,3$:

\begin{equation*}
-\frac{1}{120}, \ \frac{1}{240}, \ -\frac{17}{6720}, \ \frac{23}{13440}.
\end{equation*}
$b_{1,2}^{Sp}(n_1,1)$ for $n_1=0,1$:

\begin{equation*}
\frac{1}{11520}, \ -\frac{1}{23040}.
\end{equation*}
$b_{1,2}^{Sp}(n_1,2)$ for $n_1=0,1,2$:

\begin{equation*}
\frac{103}{3628800}, \ -\frac{103}{7257600}, \ \frac{487}{59875200}.
\end{equation*}
$b_{1,2}^{Sp}(n_1,3)$ for $n_1=0,1,2,3$:

\begin{equation*}
\frac{1}{89600}, \ -\frac{1}{179200}, \ \frac{19}{5913600}, \ -\frac{1}{492800}.
\end{equation*}

The following are $b_{0,k}^{SO}(0,3)$ for $k=1,2,3,4$:

\begin{equation*}
1
\end{equation*}

\begin{equation*}
\frac{3}{2^2 \cdot 5}
\end{equation*}

\begin{equation*}
\frac{1}{2^4 \cdot 3 \cdot 7}
\end{equation*}

\begin{equation*}
\frac{1613}{2^9 \cdot 3 \cdot 5^2 \cdot 7^2 \cdot 11 \cdot 13}.
\end{equation*}
$b_{0,k}^{SO}(0,4)$ for $k=1,2,3,4$:

\begin{equation*}
1
\end{equation*}

\begin{equation*}
\frac{71}{2 \cdot 3^2 \cdot 5 \cdot 7}
\end{equation*}

\begin{equation*}
\frac{23 \cdot 2657}{2 \cdot 3^3 \cdot 5^3 \cdot 7^2 \cdot 11 \cdot 13}
\end{equation*}

\begin{equation*}
\frac{7159 \cdot 316201}{2^6 \cdot 3^5 \cdot 5^4 \cdot 7^3 \cdot 11^2 \cdot 13 \cdot 17 \cdot 19}.
\end{equation*}
We also have $b_{1,1}^{SO}(n_1,1)$ for $n_1=0,1$:

\begin{equation*}
1, \ \frac{1}{2}.
\end{equation*}
$b_{1,1}^{SO}(n_1,2)$ for $n_1=0,1,2$:

\begin{equation*}
\frac{2}{3}, \ \frac{1}{3}, \ \frac{7}{30}.
\end{equation*}
$b_{1,1}^{SO}(n_1,3)$ for $n_1=0,1,2,3$:

\begin{equation*}
\frac{1}{2}, \ \frac{1}{4}, \ \frac{11}{60}, \ \frac{3}{20}.
\end{equation*}
$b_{1,2}^{SO}(n_1,1)$ for $n_1=0,1$:

\begin{equation*}
\frac{1}{12}, \ \frac{1}{24}.
\end{equation*}
$b_{1,2}^{SO}(n_1,2)$ for $n_1=0,1,2$:

\begin{equation*}
\frac{19}{630}, \ \frac{19}{1260}, \ \frac{26}{2835}.
\end{equation*}
$b_{1,2}^{SO}(n_1,3)$ for $n_1=0,1,2,3$:

\begin{equation*}
\frac{23}{1680}, \ \frac{23}{3360}, \ \frac{43}{10080}, \ \frac{1}{336}.
\end{equation*}

\vspace{0.5cm}

\noindent \textit{Acknowledgments.}
The first author is grateful to the Leverhulme Trust (RPG-2017-320) for the support through the research project grant ``Moments of $L$-functions in Function Fields and Random Matrix Theory". The research of the second author is supported by an EPSRC Standard Research Studentship (DTP) at the University of Exeter.

\end{document}